\newtheorem{theorem}{\textbf{Theorem}}
\newtheorem{proposition}{\textbf{Proposition}}
\newtheorem{lemma}{\textbf{Lemma}}
\newtheorem{remark}{\textbf{Remark}}
\newtheorem{corollary}{\textbf{Corollary}}
\newtheorem{algorithm}{\textbf{Algorithm}}
\newcommand{\etal}{\emph{et al.}}
\newcommand{\defn}{\triangleq}
\newcommand{\ie}{\emph{i.e.}, }
\newcommand{\dotleq}{\overset{\cdot}{\leq}}
\begin{document}
\title{Network Coding with Two-Way Relaying: Achievable Rate Regions and Diversity-Multiplexing Tradeoffs}

\author{Chun-Hung Liu, Feng Xue and Jeffrey G. Andrews
\thanks{Liu and Andrews are with the Department of Electrical and Computer Engineering, the University of Texas at Austin, Austin TX 78712-0204, USA. Xue is with Intel Corp. at Santa Clara CA, USA. Parts of this work were presented at the IEEE International conference on communications, Beijing China, May 2008 and at the Allerton conference on communication, control and computing, IL USA, Sep. 2007. This work was supported in part by the National Science Council of Taiwan, R.O.C. under Contract TMS-94-1A-050. The contact author is J. G. Andrews (Email: jandrews@ece.utexas.edu). Manuscript date: \today.}}

\maketitle

\begin{abstract}
This paper addresses the fundamental characteristics of information exchange via multihop network coding over two-way relaying in a wireless ad hoc network. The end-to-end rate regions achieved by time-division multihop (TDMH), MAC-layer network coding (MLNC) and PHY-layer network coding (PLNC) are first characterized. It is shown that MLNC does not always achieve better rates than TDMH, time sharing between TDMH and MLNC is able to achieve a larger rate region, and PLNC dominates the rate regions achieved by TDMH and MLNC. An opportunistic scheduling algorithm for MLNC and PLNC is then proposed to stabilize the two-way relaying system for Poisson arrivals whenever the rate pair is within the Shannon rate regions of MLNC and PLNC. To understand the two-way transmission limits of multihop network coding, the sum-rate optimization with or without certain traffic pattern and the end-to-end diversity-multiplexing tradeoffs (DMTs) of two-way transmission over multiple relay nodes are also analyzed.
\end{abstract}

\begin{keywords}
Network Coding, Two-way Relaying, Achievable Rate, Opportunistic Scheduling and Diversity-Multiplexing Tradeoff.
\end{keywords}

\section{Introduction}
Users in a multihop wireless network convey information to each other with the help of intermediary \emph{routing} nodes. Generally, packets are forwarded by the relays towards their respective destinations in a \emph{decode-and-forward} fashion. After the seminal paper by Ahlswede \etal \cite{RANCSYRLRWY00} on network coding for wireline networks, it is known that better performance is possible if intermediate nodes are allowed to change the content of their packets \cite{CFES0701,CFES0702}. Wireless network coding can improve throughput and reliability due to the broadcast nature of the wireless medium, and the resulting opportunities to gather information from all audible transmissions \cite{SKHRWHDKMMJC08,DNTNBB99,PLNJKES06,CHLFXSS08,FXCHLSS07,PPHY0107}. Since two-way traffic is inherent to peer-to-peer communication, the two-way relaying channel is a key building block for information exchange over multiple hops.  This paper provides fundamental characterization and limits of the two-way relaying channel with (and without) \emph{multihop} wireless network coding.

Specifically, we consider two multihop network coding protocols as illustrated in Fig. \ref{Fig:TwoWayRelaySys}(a) in this work, \ie MAC-Layer network coding (MLNC) and PHY-Layer network coding (PLNC). MLNC is a coding operation that happens at or above the media access layer \cite{RWYSYRLNCZZ06,SYRLRWYNC03,CHJH06,SKHRWHDKMMJC08,FXCHLSS07,CHLFXSS08}. PLNC is based on the coding protocol proposed in \cite{FXSS07,CSTJOSS07,GKSS07}. In order to perform MLNC and PLNC, the two source nodes respectively transmit their packets to the relay node in the first two time slots. Then the relay node constructs a new network-coded packet from the received packets and broadcasts it to the two source nodes in the third time slot. If MLNC is adopted -- whereby the packets are manipulated before channel coding -- the transmission rate of the relay node is limited by the smaller channel capacity \cite{FXSS07}. Since the network coding procedure of PLNC is performed instead on channel coded data, PLNC can individually achieve the broadcast channel capacities from the relay node to the source/destination nodes \cite{SJKPMVT08, CSTJOSS07, GKSS07, FXSS07,FXCHLSS07}. Therefore, PLNC always achieves better throughput than MLNC, especially when the two channels are highly asymmetric. PLNC's performance gains come at the cost of transceiver complexity.

\subsection{Motivation and Related Work}
Traditionally (pre-network coding), information exchange between two users via a relay has been accomplished by a time-division multihop (TDMH) protocol in four time slots\footnote{Frequency-division could be used as well. In this paper the comparisons focus on time-division systems only, and comparisons can be applied to frequency-division systems similarly.}, as shown in Fig. \ref{Fig:TwoWayRelaySys}(a). Intuitively, the MLNC and PLNC protocols {\em save} one time slot.  In \cite{PLNJKES06}, two-way relaying for cellular systems was considered, while \cite{SKHRWHDKMMJC08} and \cite{SKDKWHMMJC0905} proposed an MLNC algorithm effective for wireless mesh networks in heavy traffic. The network coding protocol in Fig. \ref{Fig:TwoWayRelaySys}(b) can be further reduced to two slots if advanced joint coding/decoding -- i.e. analog network coding (ANC) -- is allowed. In this case, both source nodes send their packets to the relay node simultaneously during the first slot; then the relay node either amplifies and broadcasts the signals, or broadcasts the XOR-ed packets after decoding by successive interference cancellation \cite{SKSGDK07,PPHY0607,PPHY0107,MCAY08}. The achievable rates for analog network coding were studied in \cite{SKIMAGDKMM07,PPHY0107}.

Although MLNC and analog network coding have been shown to achieve throughput gains in certain environments\cite{PLNJKES06,PPHY0607,CHLFXSS08}, it is unclear how channel realizations influence the achievable rates and whether MLNC and/or PLNC are always better than TDMH in terms of end-to-end throughput. In terms of implementation complexity, the four techniques can be ranked from low to high as TDMH, MLNC, PLNC ANC. Analog network coding requires stringent synchronization and joint decoding and could suffer from decoding error propagation due to channel estimation and quantization errors.  Characterizing the exact achievable rate regions of TDMH, MLNC and PLNC will provide guidance on the tradeoffs between them.

A frequently neglected but very important consideration for network coding is that the traffic patterns can vary significantly, and have a significant effect on the ability to achieve the gains promised by network coding.   For example, data downloads are essentially one-way traffic, while peer-to-peer conversations are fairly symmetric. Analyzing the gains from network coding in a two-way relay network in view of the traffic pattern allows insight into how such systems should be designed.  In this paper we explicitly consider the traffic pattern in our results.

Finally, network coding can be used to exploit cooperative diversity between source and destination nodes\cite{LXTEFJKDJC07,YCSKJL06}. Since network coding is able to provide diversity as well as throughput gain, it is of interest to understand the diversity-multiplexing tradeoffs (DMTs)\footnote{The diversity-multiplexing tradeoff (DMT) for point-to-point multiple input and multiple output (MIMO) channels was found in \cite{LZDNCT03}, and has become a popular metric for comparing transmission protocols.} of MLNC and PLNC and determine if they are better than TDMH's. Since we consider two-way transmission over multiple relays, this plurality of relays may cooperate in a number of different ways or not at all, and each cooperation scenario leads to a different DMT result for TDMH, MLNC and PLNC.

\subsection{Contributions}
In this paper, we first characterize the exact achievable rate regions of TDMH, MLNC and PLNC and show that MLNC is not always superior to TDMH, while PLNC has the largest rate region among the three. An opportunistic scheduling algorithm is then presented to achieve larger rate region than those achieved independently by TDMH and MLNC. The stability of this scheduling under Poisson arrivals is shown via queuing analysis. Subsequently, the sum rates of the three protocols with or without traffic constraint are determined. The optimal sum rate with or without a traffic pattern constraint -- defined as the ratio between the rates of the two directions -- is also characterized.

The DMTs of the three transmission protocols with different relay collaboration scenarios are all derived. They are quite different from previous multihop DMT results (e.g. see \cite{DGAGHVP08,RVRWH08,ABAKDPRAL06,JNLGWW03}) due to their dependence on the traffic pattern, time allocation to each transmission direction, and the number of cooperative relay nodes in the system. We first characterize the DMT of TDMH with relay collaboration and show that it coincides to that of TDMH with an optimally selected relay node. Then, MLNC and PLNC are shown to achieve the same DMT in the high signal-to-noise ratio (SNR) regime, and that better tradeoffs are attained if an optimal relay node is selected to broadcast. Considering these results, we conclude that selecting an optimal relay node to receive and transmit (or broadcast) is preferable. Finally, we find that MLNC and PLNC can have a worse DMT if there is suboptimal time allocation for a certain traffic pattern. Intuitively, if the offered traffic load is much higher in the forward direction than the backward direction relative to one of the source nodes, then network coding may not be helpful for that source since it presumes a symmetric data rate in the high SNR regime.



\section{System Model and Assumptions}\label{Sec:SysmMode&Assu}
Consider a multihop wireless network in which information exchange by multihop routing can be characterized by considering the two-way relaying system illustrated in Fig. \ref{Fig:TwoWayRelaySys}, where the two source nodes A and B would like to exchange their data packets $W_A$ and $W_B$, denoted as two binary sequences. In the sequel, we assume there is no direct channel between the two source nodes, otherwise, mutihop is not needed \cite{OOSS06,MSJNLMHDJCTF06}. All nodes in the network are assumed to be single-antenna and \emph{half-duplex}, \ie no nodes can transmit and receive at the same time. Also, we assume that channel gains are constant during the whole transmission, and the channel gain between any two nodes X and Y, denoted by  $h_{XY}$, is reciprocal and modeled as a zero-mean, independent, circularly symmetric complex Gaussian random variable with variance $1/\sigma_{XY}$.

The core idea of MLNC is that the relay constructs a new network-coded packet $W_D$ after receiving $W_A$ and $W_B$, and then it broadcasts $W_D$ to both source nodes. $W_D$ is obtained by directly XOR-ing $W_A$ and $W_B$ bitwise, \ie $W_D = W_A \oplus W_B$. As a source node receives $W_D$, it can decode the new packet by XOR-ing $W_D$ with its side information, \ie $W_A$ or $W_B$. In contrast to XOR-ing data contents in MLNC, PLNC does network coding on channel codes, i.e., after channel encoding to individual data contents \cite{FXSS07,CSTJOSS07,GKSS07}. Taking binary symmetric channel for example, the relay broadcasts $X_{AB}\oplus X_{BA}$ after receiving $W_A$ and $W_B$, where $X_{AB}$ and $X_{BA}$ stand for the channel codes of $W_A$ and $W_B$ respectively, as shown in Fig. \ref{Fig:BinaryPLNC}. The channel encoder of $W_A$ ($W_B$) which generates $X_{AB}$ ($X_{BA}$) is designed according to the channel condition from relay node D to source node A (B). So when the XOR-ed channel code is received by a source node, it can subtract $X_{BA}$ or $X_{AB}$ before channel decoding, as shown in Fig. \ref{Fig:BinaryPLNC}. As shown in \cite{FXSS07,CSTJOSS07,GKSS07}, as long as there exists a common input distribution at the relay node that achieves channel capacities for both directions (e.g., binary symmetric channel and AWGN channel), similar idea works. In this paper, we assume the existence of such common input distribution. Due to the different constructions, the available broadcast rates for MLNC and PLNC are different. For MLNC, the broadcast rate is limited by the smaller broadcast capacity, due to the fact that both ends need decode the same (XOR-ed) message. For PLNC, each direction can achieve its individual channel capacity \cite{FXSS07}. Let $C_{XY}$ denote the channel capacity from nodes X to Y. The broadcast rates of MLNC and PLNC can be concluded from \cite{FXSS07}\cite{GKSS07} as follows.
\begin{lemma}\label{Lem:BCRateNC}
The achievable broadcast rate of MLNC for the relay node is $C_{\min}$ for both directions, where $C_{\min}\defn \min\{C_{DA},C_{DB}\}$. If there exists a common input distribution that maximizes the mutual information from relay node $D$ to nodes $A$ and $B$, then the achievable broadcast rates of PLNC for both directions are respectively $C_{DA}$ and $C_{DB}$.
\end{lemma}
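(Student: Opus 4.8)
The plan is to establish the two claims of Lemma~\ref{Lem:BCRateNC} separately, treating the MLNC broadcast rate as an achievability-plus-converse argument and the PLNC broadcast rate as a direct consequence of the common-input-distribution hypothesis together with a standard superposition/XOR argument on channel codes. Throughout I would work in the third time slot only, where node $D$ holds both messages $W_A$ and $W_B$ (the first two slots are orthogonal point-to-point uplinks and are not at issue here), and I would treat the downlink from $D$ as a degraded Gaussian broadcast channel in the sense that one of the two receivers has the stronger channel.

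For the MLNC part, first I would argue achievability: node $D$ forms $W_D = W_A\oplus W_B$, a single message of rate $R$, and channel-encodes it with a code designed for the \emph{worse} of the two downlink channels, i.e.\ for capacity $C_{\min}=\min\{C_{DA},C_{DB}\}$. Since both $A$ and $B$ see a channel at least as good as the $C_{\min}$ channel, both can decode $W_D$ reliably whenever $R<C_{\min}$; each then recovers the desired packet by XOR-ing with its own side information ($W_A$ at node $A$, $W_B$ at node $B$), which is information-lossless. Hence any rate below $C_{\min}$ is achievable in both directions simultaneously. For the converse, the key observation is that in MLNC the relay is constrained to transmit a \emph{single} codeword carrying the common message $W_D$, and \emph{both} receivers must decode that same codeword; the decodability constraint at each receiver is a point-to-point constraint, so Fano's inequality at the receiver with the smaller capacity forces $R\le C_{\min}$. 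Thus the MLNC broadcast rate is exactly $C_{\min}$ for each direction.

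For the PLNC part, I would invoke the hypothesis that there is a common input distribution $p^\star(x)$ at $D$ that simultaneously maximizes $I(X;Y_A)$ and $I(X;Y_B)$, so that $C_{DA}=I_{p^\star}(X;Y_A)$ and $C_{DB}=I_{p^\star}(X;Y_B)$ are simultaneously attainable with one channel input law. The point is that PLNC performs the network-coding operation \emph{after} channel encoding: $D$ transmits (for the binary case) $X_{AB}\oplus X_{BA}$, where $X_{AB}$ is a codeword for $W_A$ drawn from a capacity-achieving code for the $D\to A$ link and $X_{BA}$ is a codeword for $W_B$ drawn from a capacity-achieving code for the $D\to B$ link. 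Since node $A$ knows $W_A$, it can reconstruct $X_{AB}$ and subtract (XOR off) its contribution, leaving a clean observation of $X_{BA}$ through the $D\to A$ channel; but $X_{BA}$ is the codeword intended for node $B$, so this does not directly help $A$. I would therefore present it the other way: node $A$ wants $W_B$, so $D$'s transmitted symbol, after $A$ removes the known $X_{AB}$ component, leaves node $A$ observing $X_{BA}$ over its own channel — and for this to be decodable at $A$ one needs the code for $W_B$ to be matched to the $D\to A$ channel, not the $D\to B$ channel. The correct accounting (as in \cite{FXSS07,GKSS07}) is that each receiver, after canceling its known side-information codeword, is left with a standard point-to-point channel to $D$ on which it decodes the \emph{other} source's message; because the common input distribution lets both these residual channels operate at their respective capacities, node $A$ decodes $W_B$ reliably at any rate below $C_{DA}$ and node $B$ decodes $W_A$ reliably at any rate below $C_{DB}$. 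Hence PLNC achieves the pair $(C_{DA},C_{DB})$, and converses for each direction follow again from Fano's inequality on the respective point-to-point residual channel.

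The main obstacle, and the step I would be most careful with, is the PLNC direction-labeling and the precise sense in which the XOR (or more generally the superposition under $p^\star$) of two channel codewords remains decodable after side-information cancellation: one must check that removing the known codeword genuinely reduces the observation to a point-to-point channel at the intended capacity, which is exactly where the existence of a \emph{common} capacity-achieving input distribution is essential (without it, the two codebooks could not be jointly generated from one input law and the cancellation would leave a residual channel of reduced capacity). For the binary symmetric and AWGN cases this is immediate — XOR of independent uniform bits is uniform, and sums of Gaussians are Gaussian — and I would state the general case as resting on the assumed existence of such a $p^\star$, citing \cite{FXSS07,CSTJOSS07,GKSS07} for the full argument rather than reproducing it.
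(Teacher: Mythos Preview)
The paper does not actually prove Lemma~\ref{Lem:BCRateNC}: it is stated immediately after the sentence ``The broadcast rates of MLNC and PLNC can be concluded from \cite{FXSS07}\cite{GKSS07} as follows,'' and no proof (or even sketch) is given in the paper itself. So there is no ``paper's own proof'' to compare against; the authors defer entirely to the cited references.

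Your sketch is essentially the standard argument that those references contain, and it is correct in substance. The MLNC part is clean: a single common message $W_D=W_A\oplus W_B$ must be decoded by both receivers, so the worse link governs, giving $C_{\min}$. For PLNC you correctly identify the mechanism (XOR at the channel-codeword level, side-information cancellation at each receiver, and the common capacity-achieving input distribution guaranteeing that the residual point-to-point channels are at full capacity). Your visible stumble over the direction labeling---first matching $X_{AB}$ to the $D\to A$ link, then catching that this cannot work and restating that the code for $W_B$ must be matched to the $D\to A$ channel since node $A$ is the one decoding it---lands in the right place; the paper's own prose around Fig.~\ref{Fig:BinaryPLNC} has the same ambiguity, so your self-correction is appropriate rather than a gap. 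One minor point: the lemma is phrased as an \emph{achievability} statement (``achievable broadcast rate''), so your Fano-based converses, while true for the protocols as defined, are more than the lemma asks for.
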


In this paper, we also consider the information exchange between two source nodes can be completed with multiple relay nodes, as shown in Fig. \ref{Fig:TwoWayRelaySys}(b). $\mathcal{D}_{AB}$ denotes the set of the relay nodes available\footnote{where ``available'' means any relay node in $\mathcal{D}_{AB}$ can successfully decode the information from both source nodes.} between source nodes $A$ and $B$. Denote by $|\mathcal{D}_{AB}|$ the number of relay nodes in $\mathcal{D}_{AB}$, which is usually a random variable for different time slots; however, to facilitate the analysis here we assume it remains constant during the period of exchanging packets. We assume all nodes in $\mathcal{D}_{AB}$ are close and able to collaborate under reasonable communication overhead so that every relay node can share its received information with other relays. In this context, $\mathcal{D}_{AB}$ \emph{virtually} becomes a big relay node equipped with $|\mathcal{D}_{AB}|$ antennas. The channels from node A to $\mathcal{D}_{AB}$ become a single-input-multiple-output (SIMO) channel (or a MISO channel from $\mathcal{D}_{AB}$ to node A). Therefore, receive maximum ratio combining (MRC) and transmit MRC can be accomplished in $\mathcal{D}_{AB}$ for TDMH assuming joint processing can be carried out. Although receive MRC can be performed in the first two transmission stages for MLNC and PLNC, it is hard to achieve bidirectional transmit MRC in the broadcast stage. Thus in analyzing MLNC and PLNC with relay collaboration we  consider two scenarios of broadcasting, \ie all relay nodes broadcast at the same time and only an optimally selected relay node broadcasts.

\section{Achievable Rate Region, Opportunistic Network Coding and Scheduling} \label{Sec:AchiRateRegi&RelaSeleAlgo}
In this section, we are interested in determining the \emph{end-to-end} rate pair $(R_{AB},R_{BA})$ achieved by the aforementioned three protocols. For convenience, we call $R_{AB}$ the forward rate, $R_{BA}$ the backward rate and $\mu$ the traffic pattern parameter which is the ratio $R_{AB}/R_{BA}$. Here we only characterize the end-to-end rate regions achieved by TDMH, MLNC and PLNC for the single relay network in Fig. \ref{Fig:TwoWayRelaySys}(a) since they are easily extended to the multiple relay case. Given the achievable rate regions, two opportunistic packet scheduling algorithms are  respectively proposed for MLNC and PLNC, and their stability with random arrivals are characterized as well.

\subsection{Achievable Rate Regions for Two-Way Transmission Protocols over a Signal Relay}\label{SubSec:AchiRateRegi}
For the two-way relaying system in Fig. \ref{Fig:TwoWayRelaySys}, we assume that $C_{DA}$ and $C_{DB}$ are achieved by the same input distribution. The achievable rate region is basically constructed by the forward and backward rate pairs $(R_{AB},R_{BA})$. First consider TDMH that needs four time slots to exchange packets. Since its achievable Shannon rate pairs are constrained by time allocations in the four time slots, its achievable rate region is
\begin{eqnarray}\label{Eqn:AchiRateRegiTDMH}
\mathcal{R}_{\texttt{TDMH}} \defn \bigg\{(R_{AB},R_{BA}): R_{AB}\leq \{\lambda_1C_{AD},\lambda_2C_{DB}\},
R_{BA}\leq\{\lambda_3C_{BD},\lambda_4C_{DA}\}, \sum_{k=1}^4 \lambda_k =1 \bigg\},
\end{eqnarray}
where $\{\lambda_k\}\in [0,1]$ are the time-allocation parameters for four transmission time slots. Define $\Sigma_{AB}\defn(1/C_{AD}+1/C_{DB})^{-1}$ and $\Sigma_{BA}\defn(1/C_{BD}+1/C_{DA})^{-1}$. Therefore, we have the following:
\begin{theorem}\label{Thm:AchieRateRegiTDMH}\emph{
$\mathcal{R}_{\texttt{TDMH} }$ is the triangle with vertices $\mathbf{0}$, $(\Sigma_{AB},0)$ and $(0,\Sigma_{BA})$, as shown in Fig. \ref{Fig:AchiRateRegi}.}
\end{theorem}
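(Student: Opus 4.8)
The plan is to establish the two set inclusions separately, after first rewriting $\mathcal{R}_{\texttt{TDMH}}$ in \eqref{Eqn:AchiRateRegiTDMH} as a union. The constraints there simply say that $(R_{AB},R_{BA})$ belongs to the axis-aligned rectangle $\mathcal{Q}(\lambda)\defn[0,\min\{\lambda_1 C_{AD},\lambda_2 C_{DB}\}]\times[0,\min\{\lambda_3 C_{BD},\lambda_4 C_{DA}\}]$ for some feasible $\lambda=(\lambda_1,\lambda_2,\lambda_3,\lambda_4)$, so $\mathcal{R}_{\texttt{TDMH}}=\bigcup_\lambda\mathcal{Q}(\lambda)$. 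Each $\mathcal{Q}(\lambda)$ is downward closed, hence so is the union, and the region is therefore pinned down by its Pareto frontier. Writing $t_{AB}=\lambda_1+\lambda_2$ and $t_{BA}=\lambda_3+\lambda_4$ (with $t_{AB}+t_{BA}\le 1$, since any leftover time can be dumped into one slot without tightening anything), the frontier comes from the two decoupled scalar problems $\max\min\{\lambda_1 C_{AD},\lambda_2 C_{DB}\}$ over $\lambda_1+\lambda_2=t_{AB}$ and the analogous backward one over budget $t_{BA}$.

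For the per-direction problem I would use the elementary bottleneck-equalization argument: if $\lambda_1 C_{AD}\neq\lambda_2 C_{DB}$ with $\lambda_1+\lambda_2$ held fixed, shifting time from the slot with the larger product to the other strictly increases $\min\{\lambda_1 C_{AD},\lambda_2 C_{DB}\}$, so the optimum occurs at $\lambda_1 C_{AD}=\lambda_2 C_{DB}=:R_{AB}$. Then $\lambda_1=R_{AB}/C_{AD}$, $\lambda_2=R_{AB}/C_{DB}$, whence $t_{AB}=R_{AB}(1/C_{AD}+1/C_{DB})=R_{AB}/\Sigma_{AB}$; \ie the best forward rate under budget $t_{AB}$ is $t_{AB}\,\Sigma_{AB}$, and likewise the best backward rate under budget $t_{BA}$ is $t_{BA}\,\Sigma_{BA}$. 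Using $t_{AB}+t_{BA}\le 1$ gives $R_{AB}/\Sigma_{AB}+R_{BA}/\Sigma_{BA}\le 1$, and combined with $R_{AB},R_{BA}\ge 0$ this is exactly the claimed triangle with vertices $\mathbf{0}$, $(\Sigma_{AB},0)$ and $(0,\Sigma_{BA})$. Hence $\mathcal{R}_{\texttt{TDMH}}$ is contained in that triangle.

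For the converse, given any $(R_{AB},R_{BA})\ge 0$ with $R_{AB}/\Sigma_{AB}+R_{BA}/\Sigma_{BA}\le 1$, I would set $\lambda_1=R_{AB}/C_{AD}$, $\lambda_2=R_{AB}/C_{DB}$, $\lambda_3=R_{BA}/C_{BD}$, $\lambda_4=R_{BA}/C_{DA}$, and then add the nonnegative slack $1-\sum_k\lambda_k=1-R_{AB}/\Sigma_{AB}-R_{BA}/\Sigma_{BA}$ to $\lambda_1$ so that $\sum_k\lambda_k=1$. Since $\Sigma_{AB}\le C_{AD}$ and $\Sigma_{AB}\le C_{DB}$ (and similarly for the backward quantities), every $\lambda_k$ then lies in $[0,1]$, and by construction $R_{AB}\le\min\{\lambda_1 C_{AD},\lambda_2 C_{DB}\}$ and $R_{BA}\le\min\{\lambda_3 C_{BD},\lambda_4 C_{DA}\}$, so $(R_{AB},R_{BA})\in\mathcal{R}_{\texttt{TDMH}}$. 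This gives the reverse inclusion and completes the proof.

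I do not expect a genuine obstacle here; this is essentially a time-sharing computation. The only mildly delicate steps are noting that the union of the downward-closed rectangles $\mathcal{Q}(\lambda)$ is itself downward closed (so that sweeping the balance-point solution over $t_{AB}\in[0,1]$ fills the whole triangle, not merely its hypotenuse), and the small inequality bookkeeping verifying $\lambda_k\le 1$ in the converse, which follows from $\Sigma_{AB}\le C_{AD}, C_{DB}$ and $\Sigma_{BA}\le C_{BD}, C_{DA}$.
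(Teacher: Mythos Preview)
Your proposal is correct and follows essentially the same two-inclusion strategy as the paper. The one notable difference is in the outer-bound step: where you argue via bottleneck-equalization (optimizing each direction under a time budget $t_{AB}$, $t_{BA}$ and then invoking $t_{AB}+t_{BA}\le 1$), the paper simply divides each of the four linear constraints $R_{AB}\le\lambda_1 C_{AD}$, $R_{AB}\le\lambda_2 C_{DB}$, $R_{BA}\le\lambda_3 C_{BD}$, $R_{BA}\le\lambda_4 C_{DA}$ by its capacity and adds them to obtain $R_{AB}/\Sigma_{AB}+R_{BA}/\Sigma_{BA}\le\sum_k\lambda_k=1$ in one line. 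For achievability the paper exhibits only the two nontrivial vertices and then invokes convexity (time-sharing), whereas you give an explicit feasible $\lambda$ for every point of the triangle; both are fine, and your construction has the mild advantage of not relying on an implicit time-sharing argument.
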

\begin{proof}
Vertex $(0,\Sigma_{BA})$ corresponds to the case of one-way backward traffic, it is achieved by setting $\lambda_1=\lambda_2=0$, $\lambda_3=\frac{\Sigma_{BA}}{C_{BD}}$ and $\lambda_4=\frac{\Sigma_{BA}}{C_{DA}}$. Similarly, vertex $(\Sigma_{AB},0)$ corresponds to the case of one-way forward traffic, achieved by setting $\lambda_3=\lambda_4=0$, $\lambda_1=\frac{\Sigma_{AB}}{C_{AD}}$ and $\lambda_2=\frac{\Sigma_{AB}}{C_{DB}}$. Since $\mathcal{R}_{\texttt{TDMH} }$ is described by linear constraints, it is convex and thus achievable. Now we show that $\mathcal{R}_{\texttt{TDMH} }$ is also an outer bound for TDMH. Consider the four linear constraints of transmission rates in \eqref{Eqn:AchiRateRegiTDMH}. Dividing each of them by their corresponding channel capacity and adding them up, we obtain
$$\frac{R_{AB}}{C_{AD}}+\frac{R_{AB}}{C_{DB}}+\frac{R_{BA}}{C_{BD}}+\frac{R_{BA}}{C_{DA}}=\frac{R_{AB}}{\Sigma_{AB}}+\frac{R_{BA}}{\Sigma_{BA}}\leq \sum_{k=1}^4 \lambda_k=1.$$
This is exactly the region below the line connecting vertices $(0,\Sigma_{BA})$ and $(\Sigma_{AB},0)$.
\end{proof}

For MLNC, by Lemma \ref{Lem:BCRateNC} its achievable rates are defined in similar fashion as in \eqref{Eqn:AchiRateRegiTDMH} as follows
\begin{eqnarray}\label{Eqn:AchiRateRegiMLNC}
  \mathcal{R}_{\texttt{MLNC} }\defn \bigg\{(R_{AB},R_{BA}): R_{AB}\leq \{\lambda_1C_{AD},\lambda_3C_{\min}\}, R_{BA}\leq \{\lambda_2C_{BD},\lambda_3C_{\min}\},\sum_{k=1}^3\lambda_k=1 \bigg\}.
\end{eqnarray}
Define $\Sigma_{AA}\defn(1/C_{AD}+1/C_{DA})^{-1}$, $\Sigma_{BB}\defn(1/C_{BD}+1/C_{DB})^{-1}$ and $\Sigma_{ABB}\defn(1/\Sigma_{AB}+1/C_{BD})^{-1}$. Then we have the following theorem characterizing the achievable rate region of MLNC.
\begin{theorem}\label{Thm:AchieRateRegiMLNC}
\emph{If $C_{DB}\leq C_{DA}$, then $\mathcal{R}_{\texttt{MLNC} }$ is the quadrilateral with vertices $\mathbf{0}$, $(\Sigma_{AB},0)$,  $(0,\Sigma_{BB})$ and $(\Sigma_{ABB},\Sigma_{ABB})$, as shown in Fig. \ref{Fig:AchiRateRegi}(a). If $C_{DB} > C_{DA}$, then $\mathcal{R}_{\texttt{MLNC} }$ is the quadrilateral with vertices $\mathbf{0}$, $(\Sigma_{AA},0)$, $(\Sigma_{ABB},\Sigma_{ABB})$ and $(0,\Sigma_{BA})$, as shown in Fig. \ref{Fig:AchiRateRegi}(b).}
\end{theorem}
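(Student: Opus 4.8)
The key observation is that $\mathcal{R}_{\texttt{MLNC}}$ as written in \eqref{Eqn:AchiRateRegiMLNC} is exactly the image, under the projection $(\lambda_1,\lambda_2,\lambda_3,R_{AB},R_{BA})\mapsto(R_{AB},R_{BA})$, of the polytope cut out by the listed linear constraints. Hence it is automatically convex, and — since those constraints are just the per-slot Shannon capacities together with the MLNC broadcast rate $C_{\min}$ of Lemma \ref{Lem:BCRateNC} — it is achievable by time-division with capacity-achieving codes and bitwise XOR. So the entire content of the theorem is to compute this projection explicitly and read off its extreme points; the plan is to do this by eliminating $\lambda_1,\lambda_2,\lambda_3$ via Fourier--Motzkin.

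First I would substitute $\lambda_1=1-\lambda_2-\lambda_3$, leaving the constraints $\lambda_2\ge R_{BA}/C_{BD}$, $\lambda_3\ge R_{AB}/C_{\min}$, $\lambda_3\ge R_{BA}/C_{\min}$, $\lambda_2+\lambda_3\le 1-R_{AB}/C_{AD}$, and $\lambda_2,\lambda_3\ge0$. Eliminating $\lambda_2$ (its tightest lower bound is $R_{BA}/C_{BD}$, its tightest upper bound $1-\lambda_3-R_{AB}/C_{AD}$) leaves $\lambda_3\le 1-R_{AB}/C_{AD}-R_{BA}/C_{BD}$ together with $\lambda_3\ge\max\{R_{AB},R_{BA}\}/C_{\min}$ and $0\le\lambda_3\le1$. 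A feasible $\lambda_3$ exists iff
\begin{equation*}
\frac{R_{AB}}{C_{AD}}+\frac{R_{BA}}{C_{BD}}+\frac{\max\{R_{AB},R_{BA}\}}{C_{\min}}\le 1,\qquad R_{AB},R_{BA}\ge 0,
\end{equation*}
the side conditions $R_{AB}/C_{AD}+R_{BA}/C_{BD}\le1$ and $\max\{\cdot\}/C_{\min}\le1$ being implied. This is the exact description of $\mathcal{R}_{\texttt{MLNC}}$; its left-hand side is a convex piecewise-linear function of $(R_{AB},R_{BA})$, so the region is a convex polygon with a kink only across the line $R_{AB}=R_{BA}$.

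Next I would split on the sign of $C_{DA}-C_{DB}$ and on the kink. If $C_{DB}\le C_{DA}$ then $C_{\min}=C_{DB}$: on $\{R_{AB}\ge R_{BA}\}$ the inequality becomes $R_{AB}/\Sigma_{AB}+R_{BA}/C_{BD}\le1$ (using $1/C_{AD}+1/C_{DB}=1/\Sigma_{AB}$), and on $\{R_{AB}\le R_{BA}\}$ it becomes $R_{AB}/C_{AD}+R_{BA}/\Sigma_{BB}\le1$ (using $1/C_{BD}+1/C_{DB}=1/\Sigma_{BB}$). Putting $R_{BA}=0$, then $R_{AB}=0$, then $R_{AB}=R_{BA}$ in these two lines produces the candidate vertices $(\Sigma_{AB},0)$, $(0,\Sigma_{BB})$, and the diagonal point $(\Sigma_{ABB},\Sigma_{ABB})$ with $1/\Sigma_{ABB}=1/\Sigma_{AB}+1/C_{BD}$, besides $\mathbf{0}$; the immediate inequalities $\Sigma_{ABB}<\Sigma_{AB}$ and $\Sigma_{ABB}<\Sigma_{BB}$ (from the definitions) confirm that the diagonal point lies strictly inside both axis intercepts, so the region is genuinely the quadrilateral of Fig. \ref{Fig:AchiRateRegi}(a). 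The case $C_{DB}>C_{DA}$ is handled identically with $C_{\min}=C_{DA}$, now grouping $1/C_{AD}+1/C_{DA}=1/\Sigma_{AA}$ on $\{R_{AB}\ge R_{BA}\}$ and $1/C_{BD}+1/C_{DA}=1/\Sigma_{BA}$ on $\{R_{AB}\le R_{BA}\}$, giving vertices $(\Sigma_{AA},0)$, $(0,\Sigma_{BA})$, the common diagonal value, and $\mathbf{0}$ as in Fig. \ref{Fig:AchiRateRegi}(b).

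The only real obstacle is bookkeeping: carrying the two lower bounds $R_{AB}/C_{\min}$ and $R_{BA}/C_{\min}$ on $\lambda_3$ correctly through the elimination (this is what produces the $\max$ and hence the piecewise structure), and then checking in each case that the diagonal intersection point sits between the two axis intercepts so that the region is the claimed quadrilateral rather than a triangle or a pentagon. Convexity and achievability need no separate argument, since the region is by construction the projection of the polytope of jointly feasible $(\lambda_1,\lambda_2,\lambda_3,R_{AB},R_{BA})$ vectors.
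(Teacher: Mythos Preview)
Your proof is correct and ultimately lands on the same two bounding half-planes as the paper, but by a somewhat different route. The paper argues achievability and converse separately: it first exhibits explicit time allocations $(\lambda_1,\lambda_2,\lambda_3)$ realizing each claimed vertex and appeals to time-sharing for the interior; then, for the outer bound, it divides each of the per-slot constraints by its capacity and adds them in the two possible groupings to obtain exactly your inequalities $R_{AB}/\Sigma_{AB}+R_{BA}/C_{BD}\le1$ and $R_{AB}/C_{AD}+R_{BA}/\Sigma_{BB}\le1$. You instead observe once that $\mathcal{R}_{\texttt{MLNC}}$ is the linear projection of a polytope---so convexity and achievability need no separate verification---and then eliminate $\lambda_1,\lambda_2,\lambda_3$ by Fourier--Motzkin to reach the single piecewise-linear description with the $\max$, which unfolds into those same two half-planes. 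Your approach is more systematic and makes transparent why no explicit vertex construction is required; the paper's approach has the minor advantage of displaying the optimal $\lambda$'s at each corner, which it reuses downstream (e.g., in the scheduling analysis of Section~\ref{SubSec:OppoNetwCodi}).
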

\begin{proof}
First consider the case $C_{DB}\leq C_{DA}$. The achievable rate region $\mathcal{R}_{\texttt{MLNC} }$ in \eqref{Eqn:AchiRateRegiMLNC} becomes
\begin{eqnarray}\label{Eqn:AchiRateRegi01}
 \mathcal{R}_{\texttt{MLNC} }= \bigg\{(R_{AB},R_{BA}): R_{AB}\leq \{\lambda_1C_{AD},\lambda_3C_{DB}\}, R_{BA}\leq \{\lambda_2C_{BD},\lambda_3C_{DB}\}, \sum_{k=1}^3 \lambda_k =1\bigg\}.
\end{eqnarray}
Vertex $(\Sigma_{AB},0)$ corresponds to the case of one-way forward traffic, and is achieved by setting $\lambda_1=\frac{\Sigma_{AB}}{C_{AD}}$, $\lambda_2=0$ and $\lambda_3=\frac{\Sigma_{AB}}{C_{DB}}$. Vertex $(0,\Sigma_{BB})$ corresponds to the case of one-way backward traffic, and is achieved by setting $\lambda_1=0$, $\lambda_2=\frac{\Sigma_{2}}{C_{BD}}$ and $\lambda_3=\frac{\Sigma_{BB}}{C_{DB}}$. Finally vertex $(\Sigma_{ABB},\Sigma_{ABB})$ is achieved by setting $\lambda_1=\frac{\Sigma_{ABB}}{C_{AD}}$, $\lambda_2=\frac{\Sigma_{ABB}}{C_{BD}}$ and $\lambda_3=\frac{\Sigma_{ABB}}{C_{DB}}$. Because \eqref{Eqn:AchiRateRegi01} is a convex quadrilateral defined by the three vertices and $(0,0)$, the region \eqref{Eqn:AchiRateRegi01} is achievable by time-sharing among the four vertices. Next, by time-sharing we show the quadrilateral $\mathcal{R}_{\texttt{MLNC} }$ in \eqref{Eqn:AchiRateRegi01} is also an outer bound for MLNC. Consider the four linear constraints in \eqref{Eqn:AchiRateRegi01} and divide each of them by their corresponding channel capacity. We have
\sublabon{equation}
\begin{eqnarray}
  \frac{R_{AB}}{C_{AD}}+\frac{R_{AB}}{C_{DB}}+\frac{R_{BA}}{C_{BD}}=\frac{R_{AB}}{\Sigma_{AB}}+ \frac{R_{BA}}{C_{BD}} &\leq& \sum_{k=1}^3\lambda_k=1, \label{Eqn:ConsMLNC01}\\
  \frac{R_{BA}}{C_{BD}}+\frac{R_{BA}}{C_{DB}}+\frac{R_{AB}}{C_{AD}}=\frac{R_{BA}}{\Sigma_{BB}}+\frac{R_{AB}}{C_{AD}} &\leq& \sum_{k=1}^3\lambda_k=1. \label{Eqn:ConsMLNC02}
\end{eqnarray}
\sublaboff{equation}
Equation \eqref{Eqn:ConsMLNC01} corresponds to the region below the line connecting $(\Sigma_{AB},0)$ and $(\Sigma_{ABB},\Sigma_{ABB})$, and \eqref{Eqn:ConsMLNC02} corresponds to the region below the line connecting $(\Sigma_{ABB},\Sigma_{ABB})$ and $(0,\Sigma_{BB})$. This completes the proof for the case $C_{DB}\leq C_{DA}$. The proof for the case $C_{DB}>C_{DA}$ is similar.
\end{proof}

\begin{remark}
According to Theorems \ref{Thm:AchieRateRegiTDMH}, \ref{Thm:AchieRateRegiMLNC} and Fig. \ref{Fig:AchiRateRegi}(a)(b), MLNC is not always better than TDMH. For example in Fig. \ref{Fig:AchiRateRegi}(a), MLNC is worse than TDMH when $C_{DB}<C_{DA}$ and $\mu$ is greater than certain value. This is because the broadcast rate of MLNC is limited by the worse broadcast channel, as stated in Lemma \ref{Lem:BCRateNC}. A hybrid protocol of time sharing between MLNC and TDMH, as indicated in Fig. \ref{Fig:AchiRateRegi}(a)(b) by a dashed line, can achieve a larger rate region $\mathbf{CovH}(\mathcal{R}_{\texttt{TDMH}},\mathcal{R}_{\texttt{MLNC}})$, the convex hull of $\mathcal{R}_{\texttt{TDMH} }$ and $\mathcal{R}_{\texttt{MLNC} }$. Thus, in practice the relay can decide to use TDMH or MLNC according to the results in Fig. \ref{Fig:AchiRateRegi}.
\end{remark}

For PLNC, its achievable rate region is constructed as follows according to Lemma \ref{Lem:BCRateNC}:
\begin{eqnarray}\label{Eqn:AchiRateRegiPLNC}
 \mathcal{R}_{\texttt{PLNC} }\defn\bigg\{(R_{AB},R_{BA}) : R_{AB}\leq \{\lambda_1C_{AD},\lambda_3C_{DB}\},
 R_{BA}\leq \{\lambda_2C_{BD},\lambda_3C_{DA}\}, \sum_{k=1}^3 \lambda_k=1\bigg\}.
\end{eqnarray}
Define $\Sigma_{ABA}\defn(1/\Sigma_{AB}+C_{DA}/C_{BD}/C_{DB})^{-1}$ and $\Sigma_{BAB}\defn(1/\Sigma_{BA}+C_{DB}/C_{AD}/C_{DA})^{-1}$. Then we have the following theorem.
\begin{theorem}\label{Thm:AchieRateRegiPLNC}
\emph{$\mathcal{R}_{\texttt{PLNC} }$ is the quadrilateral with vertices $\mathbf{0}$, $(\Sigma_{AB},0)$, $(\Sigma_{ABA},\Sigma_{BAB})$ and $(0,\Sigma_{BA})$.}
\end{theorem}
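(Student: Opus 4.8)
The plan is to prove this exactly as Theorems~\ref{Thm:AchieRateRegiTDMH} and \ref{Thm:AchieRateRegiMLNC} were proved: first exhibit time-allocation vectors $(\lambda_1,\lambda_2,\lambda_3)$ that realize each of the four named vertices and invoke convexity/time-sharing for the whole region, then obtain the converse by combining the four rate constraints in \eqref{Eqn:AchiRateRegiPLNC}. The vertices $(\Sigma_{AB},0)$ and $(0,\Sigma_{BA})$ are handled verbatim as before: one-way forward traffic is achieved by $\lambda_2=0$, $\lambda_1=\Sigma_{AB}/C_{AD}$, $\lambda_3=\Sigma_{AB}/C_{DB}$ (whose sum is $1$ by the definition of $\Sigma_{AB}$), and one-way backward traffic symmetrically with $\Sigma_{BA}$, $C_{BD}$, $C_{DA}$.

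The new ingredient is the vertex $(\Sigma_{ABA},\Sigma_{BAB})$. I would take $\lambda_1=\Sigma_{ABA}/C_{AD}$, $\lambda_2=\Sigma_{BAB}/C_{BD}$, and $\lambda_3=\Sigma_{ABA}/C_{DB}$, and verify two things. First, that $\Sigma_{ABA}/C_{DB}=\Sigma_{BAB}/C_{DA}$; substituting the definitions and clearing denominators, both sides equal $\left(C_{DB}/C_{AD}+1+C_{DA}/C_{BD}\right)^{-1}$, so the single value of $\lambda_3$ simultaneously meets the broadcast constraint $R_{AB}\le\lambda_3 C_{DB}$ with equality and the broadcast constraint $R_{BA}\le\lambda_3 C_{DA}$ with equality. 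Second, that $\lambda_1+\lambda_2+\lambda_3=1$; writing $\lambda_1+\lambda_3=\Sigma_{ABA}(1/C_{AD}+1/C_{DB})=\Sigma_{ABA}/\Sigma_{AB}$ and using $1/\Sigma_{ABA}=1/\Sigma_{AB}+C_{DA}/(C_{BD}C_{DB})$, feasibility reduces again to the identity $\Sigma_{BAB}/C_{DA}=\Sigma_{ABA}/C_{DB}$. Since $\mathcal{R}_{\texttt{PLNC}}$ contains every convex combination of achievable rate pairs (time-sharing over blocks of three slots) and the quadrilateral is the convex hull of its four vertices, this establishes achievability of the whole region.

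For the converse, from \eqref{Eqn:AchiRateRegiPLNC} we have $\lambda_1\ge R_{AB}/C_{AD}$, $\lambda_2\ge R_{BA}/C_{BD}$, and, since $\lambda_3$ must dominate both $R_{AB}/C_{DB}$ and $R_{BA}/C_{DA}$, $\lambda_3\ge\max\{R_{AB}/C_{DB},\,R_{BA}/C_{DA}\}$. Adding these and using $\sum_k\lambda_k=1$ gives
\[
\frac{R_{AB}}{C_{AD}}+\frac{R_{BA}}{C_{BD}}+\max\left\{\frac{R_{AB}}{C_{DB}},\frac{R_{BA}}{C_{DA}}\right\}\le 1,
\]
which is equivalent to the pair of linear inequalities $R_{AB}/\Sigma_{AB}+R_{BA}/C_{BD}\le 1$ and $R_{AB}/C_{AD}+R_{BA}/\Sigma_{BA}\le 1$ holding together (because $1\ge X+\max\{Y,Z\}$ iff $1\ge X+Y$ and $1\ge X+Z$). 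The first line passes through $(\Sigma_{AB},0)$ and, by the identity verified above, through $(\Sigma_{ABA},\Sigma_{BAB})$; the second passes through $(0,\Sigma_{BA})$ and through $(\Sigma_{ABA},\Sigma_{BAB})$. Together with $R_{AB}\ge 0$ and $R_{BA}\ge 0$, these half-planes cut out precisely the claimed quadrilateral, so it is also an outer bound and the two inclusions match.

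I expect the only genuine work to be the algebraic verification of $\Sigma_{ABA}/C_{DB}=\Sigma_{BAB}/C_{DA}$ and the observation that this single identity is what makes the vertex allocation feasible \emph{and} places $(\Sigma_{ABA},\Sigma_{BAB})$ on both converse edges; everything else transcribes the arguments for Theorems~\ref{Thm:AchieRateRegiTDMH} and \ref{Thm:AchieRateRegiMLNC}. It is worth one extra sentence to note that the four points are in convex position (in particular $\Sigma_{ABA}<\Sigma_{AB}$ and $\Sigma_{BAB}<\Sigma_{BA}$, both immediate from the definitions), so the region is a genuine quadrilateral that strictly contains $\mathcal{R}_{\texttt{TDMH}}$.
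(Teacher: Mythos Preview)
Your proposal is correct and is precisely the argument the paper has in mind: the paper omits the proof entirely, stating only that it is ``similar to the proof of Theorem~\ref{Thm:AchieRateRegiMLNC},'' and your write-up carries out exactly that similarity---achievability of the four vertices via explicit $(\lambda_1,\lambda_2,\lambda_3)$ plus time-sharing, then the converse by summing the rate constraints divided by their capacities. The one additional observation you make (that the key algebraic identity $\Sigma_{ABA}/C_{DB}=\Sigma_{BAB}/C_{DA}$ simultaneously secures feasibility of the interior vertex and places it on both bounding lines) is a nice unification but not a departure from the paper's intended route.
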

\begin{proof}
The proof is omitted here since it is similar to the proof of Theorem \ref{Thm:AchieRateRegiMLNC}.
\end{proof}

\begin{remark}\label{Rem:AchRatRegiPLNC}
Given the above results in Theorems \ref{Thm:AchieRateRegiTDMH}-\ref{Thm:AchieRateRegiPLNC} and illustrated in Fig. \ref{Fig:AchiRateRegi}, we know the achievable rate region of PLNC is larger than $\mathbf{CovH}(\mathcal{R}_{\texttt{TDMH}},\mathcal{R}_{\texttt{MLNC}})$ as long as $C_{DA}\neq C_{DB}$. Also, it should be noticed that time sharing between TDMH and PLNC does not help achieve larger rate region as in the case of MLNC and TDMH.
\end{remark}

\subsection{Opportunistic Network Coding and Scheduling}\label{SubSec:OppoNetwCodi}
The achievable rate regions for the three transmission protocols have been characterized in Section \ref{SubSec:AchiRateRegi}. In this subsection we investigate the following question: If packets arrive at the two source nodes according to a random process, how should the packets be scheduled for transmission to maximize the rate region in which the queues are stable? We assume that the buffer size for queuing packets at the two source nodes is infinite. Packets of length $\ell$ (bits) arrive at the queue of source node A according to a Poisson process with rate $\acute{R}_{AB}$, while packets of length $\ell$ (bits) arrive at the queue of source node B according to another independent Poisson process with rate $\acute{R}_{BA}$. We also assume that $\ell$ is large and channel coding is perfect.

We first consider the case where the relay node uses MLNC to route the packets. Consider the time right after the previous transmission is complete. Let $q_A$ and $q_B$ denote the number of packets in the queues at node A and B, respectively. We propose the following opportunistic packet scheduling algorithm for MLNC.
\begin{algorithm}
\label{Alg:OMLNC} \centering\small
\textbf{\emph{Opportunistic MLNC and Scheduling}}
\begin{tabular}{|lp{4.0in}|}\hline
\textsf{Step 1} ($q_A \cdot q_B \neq 0$): & Two packets (one from each queue) are sent over Relay D by MLNC.\\
\textsf{Step 2} ($q_A\neq 0$, $q_B=0$): &  One packet from node A is sent over Relay D by TDMH.\\
\textsf{Step 3} ($q_A=0$, $q_B\neq 0$): & One packet from node B is sent over Relay D by TDMH. \\ \hline
\end{tabular}
\end{algorithm}
\vspace{0.1in}

It will be shown later in Section \ref{SubSec:SumRateOpti} that symmetric traffic through the relay node achieves the best network coding gain for MLNC. So when both buffers have packets, Algorithm \ref{Alg:OMLNC} schedules symmetric traffic; when one queue is empty certainly one-way traffic should be scheduled. By doing this, Algorithm \ref{Alg:OMLNC} is able to \emph{opportunistically} achieve $\mathbf{CovH}(\mathcal{R}_{\texttt{TDMH}},\mathcal{R}_{\texttt{MLNC}})$ by time sharing between TDMH and MLNC. Moreover, the attractive feature of this algorithm is that it does not require any arrival rate information of the two queues to achieve the system stability as stated in the following theorem.
\begin{theorem}\label{Thm:StabOMLNC}
\emph{Algorithm \ref{Alg:OMLNC} stabilizes the two-way relaying system for any Poisson arrivals if the (bit-arrival) rate pair $(R_{AB},R_{BA})\in\mathbf{CovH}(\mathcal{R}_{\texttt{TDMH}},\mathcal{R}_{\texttt{MLNC}})$ where $R_{AB}=\acute{R}_{AB}\ell$ and $R_{BA}=\acute{R}_{BA}\ell$.}
\end{theorem}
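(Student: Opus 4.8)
The plan is to recast the statement as positive recurrence of an embedded Markov chain and then run a Lyapunov drift argument. Let $0=t_0<t_1<t_2<\cdots$ be the completion instants of the successive transmissions scheduled by Algorithm~\ref{Alg:OMLNC}, and set $X_n\defn\big(q_A(t_n^+),q_B(t_n^+)\big)\in\mathbb{Z}_{\ge 0}^2$. Because $\ell$ is large and channel coding is perfect, the $(n{+}1)$-st round has a deterministic duration determined by which step is invoked: $\tau_A\defn\ell/\Sigma_{AB}$ in Step~2, $\tau_B\defn\ell/\Sigma_{BA}$ in Step~3, and $\tau_M\defn\ell\big(1/C_{AD}+1/C_{BD}+1/C_{\min}\big)$ in Step~1 -- so when $C_{DB}\le C_{DA}$ one has $\tau_M=\ell/\Sigma_{ABB}$, and in general $\tau_M=\tau_A+\tau_B-\ell/\max\{C_{DA},C_{DB}\}<\tau_A+\tau_B$, i.e.\ one coded round clears a packet from each queue in less time than two separate one-way rounds. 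Since the step is a deterministic function of $X_n$, the duration is then deterministic, and the arrivals accruing during the round are independent Poisson increments, $\{X_n\}$ is a time-homogeneous Markov chain; as the durations are bounded above and below, positive recurrence of $\{X_n\}$ yields stability of the continuous-time queue process by the usual semi-Markov argument. It thus suffices to exhibit $V:\mathbb{Z}_{\ge 0}^2\to\mathbb{R}_{\ge 0}$, $\varepsilon>0$, and a finite set $F$ with $E[V(X_{n+1})-V(X_n)\mid X_n=x]\le-\varepsilon$ for $x\notin F$.

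For a linear $V(q_A,q_B)=w_Aq_A+w_Bq_B$ ($w_A,w_B>0$), the conditional drift is \emph{constant} on each mode (the mean number of arrivals of a given type over a round of length $\tau$ being $(R/\ell)\tau$): $w_A(R_{AB}/\Sigma_{ABB}-1)+w_B(R_{BA}/\Sigma_{ABB}-1)$ on $\{q_A\ge1,q_B\ge1\}$ (Step~1), $w_A(R_{AB}/\Sigma_{AB}-1)+w_BR_{BA}/\Sigma_{AB}$ on $\{q_A\ge1,q_B=0\}$ (Step~2), and $w_AR_{AB}/\Sigma_{BA}+w_B(R_{BA}/\Sigma_{BA}-1)$ on $\{q_A=0,q_B\ge1\}$ (Step~3). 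Since $(\Sigma_{AB},0)$ and $(0,\Sigma_{BA})$ are vertices of $\mathbf{CovH}(\mathcal{R}_{\texttt{TDMH}},\mathcal{R}_{\texttt{MLNC}})$ (Theorems~\ref{Thm:AchieRateRegiTDMH}--\ref{Thm:AchieRateRegiMLNC}), an interior rate pair satisfies $R_{AB}<\Sigma_{AB}$ and $R_{BA}<\Sigma_{BA}$ strictly, so the own-queue term is negative in each TDMH mode -- reflecting that when one buffer empties the scheduler falls back to the \emph{fast} one-way route, which together with $\tau_M<\tau_A+\tau_B$ is precisely why Algorithm~\ref{Alg:OMLNC} automatically performs the TDMH/MLNC time-sharing described in the remark after Theorem~\ref{Thm:AchieRateRegiMLNC}. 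The obstacle is the Step~1 drift: it can be made negative only when the traffic is balanced enough, and because negative drift in the two TDMH modes alone already forces $(R_{AB},R_{BA})\in\mathcal{R}_{\texttt{TDMH}}$, \emph{no} single linear (or quadratic) $V$ can work once the rate pair lies in $\mathbf{CovH}\setminus\mathcal{R}_{\texttt{TDMH}}$ -- exactly the regime in which network coding genuinely helps, where one direction's demand exceeds the broadcast-limited rate $\Sigma_{ABB}$ and the heavy queue is drained only at rate $\Sigma_{ABB}$ while both buffers are occupied. (If $(\Sigma_{ABB},\Sigma_{ABB})\in\mathcal{R}_{\texttt{TDMH}}$ then $\mathbf{CovH}=\mathcal{R}_{\texttt{TDMH}}$ and the linear $V$ already finishes the proof.)

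The argument must therefore capture the two-time-scale dynamics: in the heavy-$A$ regime $q_A$ grows during the MLNC bursts but $q_B$ empties and returns the channel to Step~2, and over a full excursion $q_A$ drains precisely because the rate pair lies strictly below the segment of $\partial\,\mathbf{CovH}$ joining $(\Sigma_{AB},0)$ to $(\Sigma_{ABB},\Sigma_{ABB})$; symmetrically for heavy $B$ and the segment to $(0,\Sigma_{BA})$. I would make this rigorous in one of two ways. (a) Use a piecewise-linear $V=\max\{w_A^{(1)}q_A+w_B^{(1)}q_B,\ w_A^{(2)}q_A+w_B^{(2)}q_B\}$ whose first (resp.\ second) piece is tuned to the heavy-$A$ (resp.\ heavy-$B$) cone; in the heavy-$A$ cone the required ratio $\tfrac{R_{AB}-\Sigma_{ABB}}{\Sigma_{ABB}-R_{BA}}<\tfrac{w_B^{(1)}}{w_A^{(1)}}<\tfrac{\Sigma_{AB}-R_{AB}}{R_{BA}}$ is nonempty \emph{exactly} when the rate pair sits strictly below the $(\Sigma_{AB},0)$--$(\Sigma_{ABB},\Sigma_{ABB})$ segment, i.e.\ under the hypothesis, and one verifies cone by cone that the active piece has negative drift under both the MLNC and the relevant TDMH step, the residual work being to control the jumps across the seam between the two pieces. (b) Alternatively, pass to the fluid limit of $\{X_n\}$ and show that every fluid trajectory reaches $\mathbf 0$ in finite time -- a light queue is pinned at $0$, a fraction $R_{\mathrm{light}}/\Sigma_{ABB}$ of time is spent in MLNC rounds (just enough to absorb the light inflow) and the remainder on the fast one-way route for the heavy queue, which empties by the same strict inequality -- then invoke the fluid-stability criterion to get positive Harris recurrence. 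In either route the main obstacle is the \emph{boundary} behaviour: the fast switching/averaging while one queue sits at $0$, equivalently the cross-cone overshoot of the piecewise-linear $V$ under unbounded Poisson increments; this is where the convex-hull characterization of Theorems~\ref{Thm:AchieRateRegiTDMH}--\ref{Thm:AchieRateRegiPLNC} does the real work, while the TDMH modes and the balanced part of Step~1 follow at once from the drift formulas. (Rate pairs on $\partial\,\mathbf{CovH}$ give only rate stability, which is how the boundary case should be read.)
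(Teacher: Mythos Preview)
Your embedded-chain setup and the drift formulas for a linear $V$ are correct, but the assertion that ``no single linear (or quadratic) $V$ can work once the rate pair lies in $\mathbf{CovH}\setminus\mathcal{R}_{\texttt{TDMH}}$'' is wrong, and it is precisely here that the paper's proof is simpler than either of your proposed workarounds. The paper uses a \emph{single} quadratic Lyapunov function with a cross term,
\[
V(n)=\frac{\Sigma_{\min}}{\Sigma_{AB}-\Sigma_{\min}}\,q_A^2(n)+\frac{\Sigma_{\min}}{\Sigma_{BA}-\Sigma_{\min}}\,q_B^2(n)+2\,q_A(n)\,q_B(n),
\]
where $\Sigma_{\min}\defn(1/C_{AD}+1/C_{BD}+1/C_{\min})^{-1}$ is the diagonal vertex of $\mathbf{CovH}(\mathcal{R}_{\texttt{TDMH}},\mathcal{R}_{\texttt{MLNC}})$, and applies Foster--Lyapunov directly.

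Your reasoning that the two TDMH modes alone pin the rate pair to $\mathcal{R}_{\texttt{TDMH}}$ tacitly assumes a \emph{separable} quadratic. The cross term $2q_Aq_B$ breaks this: in Step~2 one has $q_B(n)=0$ but $q_B(n{+}1)=\hat\delta_B>0$, so $2q_A(n{+}1)q_B(n{+}1)$ contributes $2q_A\,\mathbb{E}[\hat\delta_B]=2q_A R_{BA}/\Sigma_{AB}$ to the drift. The coefficient of $q_A$ in Step~2 then becomes
\[
2\left[\frac{\Sigma_{\min}}{\Sigma_{AB}-\Sigma_{\min}}\Big(\frac{R_{AB}}{\Sigma_{AB}}-1\Big)+\frac{R_{BA}}{\Sigma_{AB}}\right],
\]
and a short computation shows that the coefficient of $q_A$ in Step~1 is the \emph{same} expression times $\Sigma_{AB}/\Sigma_{\min}>0$. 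Thus the $q_A$-drift is negative in \emph{both} modes exactly when the rate pair lies strictly below the segment joining $(\Sigma_{AB},0)$ to $(\Sigma_{\min},\Sigma_{\min})$; symmetrically, the $q_B$-drift in Steps~1 and~3 is governed by the segment joining $(\Sigma_{\min},\Sigma_{\min})$ to $(0,\Sigma_{BA})$. The two inequalities together are precisely the interior of the convex hull, and Foster--Lyapunov finishes the proof in one stroke.

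So the gap in your proposal is not in the strategy (Lyapunov drift on the embedded chain is exactly right) but in dismissing the quadratic family too quickly: the coefficients above are engineered so that the cross term couples the TDMH and MLNC modes, converting what looked like three incompatible half-plane constraints into the two convex-hull edges. Your piecewise-linear and fluid-limit routes are plausible and would eventually close, but they introduce exactly the boundary/seam difficulties you flag, whereas the paper's choice of $V$ avoids them entirely.
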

\begin{proof}
Without loss of generality, assume each packet has a unit length so that  $R_{AB}=\acute{R}_{AB}$ and $R_{BA}=\acute{R}_{BA}$. Now consider the time right after the \emph{n}-th transmission. Note that each transmission is one of the three steps in Algorithm \ref{Alg:OMLNC}. Denote the queue length vector at time $n$ as $\mathbf{Q}(n)\defn [q_A(n)\,\, q_B(n)]$. It is easy to verify that $\mathbf{Q}(n)$ forms a non-reducible Markov chain. Note that the number of packets arriving at node A during a transmission time slot $\Delta t$ is a Poisson process with parameter $R_{AB}\Delta t$, while the number of packets arriving at node B is also a Poisson process with parameter $R_{BA}\Delta t$.

In order to show the stability of the Markov process $\mathbf{Q}(n)$, we define the Lyapunov function as follows
\begin{equation}\label{Eqn:LyapFuncOMLNC}
   V(n)\defn \frac{\Sigma_{\min}}{\Sigma_{AB}-\Sigma_{\min}}q_A^2(n)+\frac{\Sigma_{\min}}{\Sigma_{BA}-\Sigma_{\min}}q_B^2(n)+2q_A(n)q_B(n),
\end{equation}
where $\Sigma_{\min}\defn (1/C_{AD}+1/C_{BD}+1/C_{\min})^{-1}$. Now consider the case in \textsf{Step 1} of Algorithm \ref{Alg:OMLNC}. In this case we have $q_A(n+1)=q_A(n)-1+\delta_A(n)$ and $q_B(n+1)=q_B(n)-1+\delta_B(n)$, where $\delta_A(n)$ and $\delta_B(n)$ are Poisson random variables with parameters $R_{AB}/\Sigma_{\min}$ and $R_{BA}/\Sigma_{\min}$, respectively. It follows that
\begin{eqnarray}\label{Eqn:ExpeLyapFuncOMLNC00}
  \mathbb{E}\left[V(n+1)|\mathbf{Q}(n)\right] &=& V(n)+2\frac{\Sigma_{AB}}{\Sigma_{\min}}\left[\frac{\Sigma_{\min}}{\Sigma_{AB}-\Sigma_{\min}}\left(\frac{R_{AB}}{\Sigma_{AB}}-1\right)+\frac{R_{BA}}{\Sigma_{AB}}\right]q_A(n) \nonumber\\
  &&+2\frac{\Sigma_{BA}}{\Sigma_{\min}}\left[\frac{\Sigma_{\min}}{\Sigma_{BA}-\Sigma_{\min}}\left(\frac{R_{BA}}{\Sigma_{BA}}-1\right)+\frac{R_{AB}}{\Sigma_{BA}}\right]q_B(n)+\Delta_a,
\end{eqnarray}
where $\Delta_a$ is a constant depending on $R_{AB},R_{BA},\Sigma_{\min}, \Sigma_{AB}$ and $\Sigma_{BA}$. Next consider the case in \textsf{Step 2} of Algorithm \ref{Alg:OMLNC}. We have $q_A(n+1)=q_A(n)-1+\hat{\delta}_A(n)$ and $q_B(n+1)=\hat{\delta}_B(n)$, where $\hat{\delta}_A(n)$ and $\hat{\delta}_B(n)$ are Poisson random variables with parameters $\frac{R_{AB}}{\Sigma_{AB}}$ and $\frac{R_{BA}}{\Sigma_{AB}}$, respectively. Thus,
\begin{equation} \label{Eqn:ExpeLyapFuncOMLNC01}
 \mathbb{E}\left[V(n+1)|\mathbf{Q}(n)\right] = V(n)+2\left[\frac{R_{BA}}{\Sigma_{AB}}+\frac{\Sigma_{\min}}{\Sigma_{AB}-\Sigma_{\min}}\left(\frac{R_{AB}}{\Sigma_{AB}}-1\right)\right]q_A(n)+\Delta_b,
\end{equation}
where $\Delta_b$ is also a constant depending on $R_{AB},R_{BA},\Sigma_{\min}, \Sigma_{AB}$ and $\Sigma_{BA}$.

Finally consider the case in \textsf{Step 3} of Algorithm \ref{Alg:OMLNC}. We have $q_A(n+1)=\check{\delta}_A(n)$ and $q_B(n+1)=q_B(n)-1+\check{\delta}_B(n)$, where $\check{\delta}_A(n)$ is a Poisson random variable with parameter $\frac{R_{AB}}{\Sigma_{BA}}$ and $\check{\delta}_B(n)$ is also Poisson with parameter $\frac{R_{BA}}{\Sigma_{BA}}$. So it follows that
\begin{equation}\label{Eqn:ExpeLyapFuncOPLNC02}
 \mathbb{E}\left[V(n+1)|\mathbf{Q}(n)\right] = V(n)+2\left[\frac{R_{AB}}{\Sigma_{BA}}+\frac{\Sigma_{\min}}{\Sigma_{BA}-\Sigma_{\min}}\left(\frac{R_{BA}}{\Sigma_{BA}}-1\right)\right]q_B(n)+\Delta_c,
\end{equation}
where $\Delta_c$ is also a constant depending on $R_{AB},R_{BA},\Sigma_{\min}, \Sigma_{AB}$ and $\Sigma_{BA}$. Since $(R_{AB},R_{BA})$ is within the quadrilateral formed by the origin, $(\Sigma_{AB},0)$, $(\Sigma_{\min},\Sigma_{\min})$ and $(0,\Sigma_{BA})$, the following inequalities are obvious:
\begin{eqnarray*}
  \frac{\Sigma_{\min}}{\Sigma_{AB}-\Sigma_{\min}}\left(\frac{R_{AB}}{\Sigma_{AB}}-1\right)+\frac{R_{BA}}{\Sigma_{AB}} < 0 \quad\text{and}\quad
  \frac{\Sigma_{\min}}{\Sigma_{BA}-\Sigma_{\min}}\left(\frac{R_{BA}}{\Sigma_{BA}}-1\right)+\frac{R_{AB}}{\Sigma_{BA}} < 0.
\end{eqnarray*}
Therefore, we have $\mathbb{E}[V(n+1)|\mathbf{Q}(n)]\leq V(n)-1$ when $q_A(n)$ or $q_B(n)$ is sufficiently large. According to the Foster-Lyapunov criterion \cite{SPMRLT93}, $\mathbf{Q}(n)$ is stable.
\end{proof}

Next, we consider the opportunistic packet scheduling algorithm for PLNC. We want to show that any rate pair in Theorem \ref{Thm:AchieRateRegiPLNC} is stabilizable. Consider a pair $(Q_A,Q_B)\in \mathbb{N}^2_+$ equal to $(\max Q_f,\max Q_b),$ where $Q_f,Q_b\in \mathbb{N}_+$ are subject to $Q_f\leq \lambda_1\Sigma_{AB}/\ell,\, Q_b\leq \lambda_2\Sigma_{BA}/\ell,\, \text{and }Q_b\lambda_1\Sigma_{ABA}=Q_f\lambda_2\Sigma_{BAB}$
\footnote{Theoretically, we should choose a positive integer pair $(Q_f,Q_b)$ large enough such that $(Q_f\ell/\lambda_1,Q_b\ell/\lambda_2)=(\Sigma_{ABA},\Sigma_{BAB})$. However, such a pair may not exist. So it is acceptable to select a pair with appropriate large values of $Q_f$ and $Q_b$ such that $(Q_f\ell/\lambda_1,Q_b\ell/\lambda_2)\approx(\Sigma_{ABA},\Sigma_{BAB})$.}. Let $(\tilde{\Sigma}_{ABA},\tilde{\Sigma}_{BAB})\in\mathcal{R}_{\texttt{PLNC} }$ be the closest point to $(\Sigma_{ABA},\Sigma_{BAB})$ with constraint $Q_A\lambda_2\tilde{\Sigma}_{BAB}\approx Q_B\lambda_1\tilde{\Sigma}_{ABA}$. Also, choose $Q^*\in \mathbb{N}_+$ large enough such that the following inequalities are satisfied:
\sublabon{equation}
\begin{eqnarray}\label{Eqn:ConsOPLNC}
  \frac{\tilde{\Sigma}_{ABA}}{\Sigma_{BA}-\tilde{\Sigma}_{BAB}}\left(\frac{R_{BA}}{\Sigma_{BA}}-1\right)+\frac{R_{AB}}{\Sigma_{BA}}+\frac{Q_A}{Q^*} &<& 0, \label{Eqn:ConsOPLNC01}\\
  \frac{\tilde{\Sigma}_{BAB}}{\Sigma_{AB}-\tilde{\Sigma}_{ABA}}\left(\frac{R_{AB}}{\Sigma_{AB}}-1\right)+\frac{R_{BA}}{\Sigma_{AB}}+\frac{Q_B}{Q^*} &<& 0,\label{Eqn:ConsOPLNC02}\\
  \max\{Q_A,Q_B\}-Q^* &<& 0\label{Eqn:ConsOPLNC03}.
\end{eqnarray}
\sublaboff{equation}
Such $Q^*$ must exist because the region confined by \eqref{Eqn:ConsOPLNC01} and \eqref{Eqn:ConsOPLNC02} in the first quadrant is enclosed by the achievable rate region specified in Theorem \ref{Thm:AchieRateRegiPLNC}. We propose the following algorithm that \emph{opportunistically} schedules packet transmissions over the relay node by PLNC.
\begin{algorithm}\label{Alg:OPLNC} \textbf{\emph{Opportunistic PLNC and Scheduling}}
\vspace{0.01in}\centering\small
\begin{tabular}{|lp{4.0in}|}\hline
\textsf{Step 1} ($q_A\geq Q_A$, $q_B\geq Q_B$): & $Q_A$ packets from node A and $Q_B$ packets from node B are sent over Relay D by PLNC.\\
\textsf{Step 2} ($q_A<Q_A$, $q_B>Q_B$): &  $\min\{q_B,Q^*\}$ packets from node B are sent over Relay D by TDMH.\\
\textsf{Step 3} ($q_A>Q_A$, $q_B<Q_B$): & $\min\{q_A,Q^*\}$ packets from node A are sent over Relay D by TDMH.\\
\textsf{Step 4} ($q_A<Q_A$, $q_B<Q_B$): & $q_A$ packets from node A and $q_B$ packets from node B are sent over Relay D by PLNC.\\ \hline
\end{tabular}
\end{algorithm}
\vspace{0.1in}

It will be also discussed in Section \ref{SubSec:SumRateOpti} that PLNC achieves its maximum network coding gain when the traffic pattern over a relay node is $\mu=\frac{C_{DA}}{C_{DB}}$. So when the two buffers respectively have at least $Q_A$ and $Q_B$ packets, Algorithm \ref{Alg:OPLNC} schedules the two-way traffic such that it has a traffic pattern $\mu \approx \frac{C_{DA}}{C_{DB}}$. If one of the two buffers does not have enough packets to achieve $\mu \approx \frac{C_{DA}}{C_{DB}}$, then one-way traffic is scheduled. However, if both of the buffers do not have enough packets then PLNC is adopted because it has a better throughput than TDMH for any traffic pattern as shown in Fig. \ref{Fig:AchiRateRegi}.

\begin{theorem}\label{Thm:StabOPLNC}
\emph{Algorithm \ref{Alg:OPLNC} stabilizes the two-way relaying system for Poisson arrivals with the (bit-arrival) rate pair $(R_{AB},R_{BA})$ within the region constructed by \eqref{Eqn:ConsOPLNC01}-\eqref{Eqn:ConsOPLNC03} where $R_{AB}=\acute{R}_{AB}\ell$ and $R_{BA}=\acute{R}_{BA}\ell$.}
\end{theorem}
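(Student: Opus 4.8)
The plan is to reproduce the Foster--Lyapunov argument of Theorem~\ref{Thm:StabOMLNC}, but with a quadratic Lyapunov function tied to the PLNC operating point $(\tilde\Sigma_{ABA},\tilde\Sigma_{BAB})$ and to the batch sizes $Q_A,Q_B,Q^*$ rather than to $\Sigma_{\min}$. As there, assume unit-length packets so that $R_{AB}=\acute R_{AB}$ and $R_{BA}=\acute R_{BA}$, let $\mathbf Q(n)=[q_A(n)\ q_B(n)]$ be the queue lengths right after the $n$-th transmission, and note that $\mathbf Q(n)$ is an irreducible Markov chain: each step of Algorithm~\ref{Alg:OPLNC} removes a state-determined number of packets while, over a transmission interval $\Delta t$, Poisson$(R_{AB}\Delta t)$ and Poisson$(R_{BA}\Delta t)$ packets arrive at $A$ and $B$. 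The candidate Lyapunov function is
\[
V(n)\defn\frac{\tilde\Sigma_{BAB}}{\Sigma_{AB}-\tilde\Sigma_{ABA}}\,q_A^2(n)+\frac{\tilde\Sigma_{ABA}}{\Sigma_{BA}-\tilde\Sigma_{BAB}}\,q_B^2(n)+2\,q_A(n)\,q_B(n),
\]
whose two leading coefficients are positive because $\tilde\Sigma_{ABA}<\Sigma_{AB}$ and $\tilde\Sigma_{BAB}<\Sigma_{BA}$ by construction, so $V\ge 0$ with compact sub-level sets.

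I would then compute the one-step conditional drift $\mathbb E[V(n+1)\mid\mathbf Q(n)]-V(n)$ case by case over the four steps of Algorithm~\ref{Alg:OPLNC}. In Step~1 the update is $q_A(n+1)=q_A(n)-Q_A+\xi_A$, $q_B(n+1)=q_B(n)-Q_B+\xi_B$ with $\xi_A,\xi_B$ independent Poisson, their means being $R_{AB}$ and $R_{BA}$ times the PLNC transmission time, which by the balancing condition in the footnote makes the effective per-direction rates $\tilde\Sigma_{ABA}$ and $\tilde\Sigma_{BAB}$; in Steps~2 and~3 the served queue loses $\min\{q_B,Q^*\}$ (resp.\ $\min\{q_A,Q^*\}$) packets at the TDMH rate $\Sigma_{BA}$ (resp.\ $\Sigma_{AB}$) while the idle queue receives a Poisson batch; and Step~4 acts only inside the finite box $\{q_A<Q_A,\ q_B<Q_B\}$ and hence contributes a bounded amount. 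Expanding the squares and using independence of the arrival increments, each drift takes the form (affine in $q_A(n),q_B(n)$) $+$ (state-independent constant), exactly as in \eqref{Eqn:ExpeLyapFuncOMLNC00}-\eqref{Eqn:ExpeLyapFuncOPLNC02}.

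The core of the argument is to verify that in every case the coefficients of $q_A(n)$ and $q_B(n)$ are strictly negative whenever $(R_{AB},R_{BA})$ lies in the region \eqref{Eqn:ConsOPLNC01}-\eqref{Eqn:ConsOPLNC03}. In Step~2 only $q_B$ can be large (since $q_A<Q_A$) and, after dividing by the positive factor $2Q^*$, its coefficient equals the left side of \eqref{Eqn:ConsOPLNC01} minus $Q_A/Q^*$, so \eqref{Eqn:ConsOPLNC01} makes it negative; symmetrically \eqref{Eqn:ConsOPLNC02} handles the $q_A$-coefficient in Step~3; and \eqref{Eqn:ConsOPLNC03} ensures $\max\{Q_A,Q_B\}<Q^*$, which is what makes the capped batch sizes well defined and keeps the boundary corrections bounded. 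For Step~1 the cross term $2q_Aq_B$ couples the two queues, so the coefficients are not negative term by term; one shows they are negative jointly by combining membership in $\mathcal R_{\texttt{PLNC}}$ (Theorem~\ref{Thm:AchieRateRegiPLNC}) with \eqref{Eqn:ConsOPLNC01}-\eqref{Eqn:ConsOPLNC02}, using that the region these inequalities cut out in the first quadrant is contained in $\mathcal R_{\texttt{PLNC}}$. Putting the cases together, $\mathbb E[V(n+1)\mid\mathbf Q(n)]\le V(n)-1$ outside a finite set, and the Foster--Lyapunov criterion \cite{SPMRLT93} gives positive recurrence, i.e.\ stability.

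The step I expect to be the main obstacle is precisely this last coefficient bookkeeping. Unlike Algorithm~\ref{Alg:OMLNC}, the departures in Steps~2--3 are capped at $Q^*$ and a queue holding fewer than $Q^*$ packets is emptied in one transmission, so the transitions are not clean shift-by-a-constant maps near the thresholds $Q_A,Q_B$; showing that the affine drift bound nonetheless holds \emph{uniformly} for all large $\|\mathbf Q(n)\|$ -- and that at the PLNC operating point the arrivals accumulated during a transmission cannot, in conditional mean, overwhelm the batch $Q_A$ or $Q_B$ -- is where the slack terms $Q_A/Q^*$, $Q_B/Q^*$ and condition \eqref{Eqn:ConsOPLNC03} do their work, and where one must take $Q^*$ large enough; this last choice also absorbs the footnote's replacement of $(\Sigma_{ABA},\Sigma_{BAB})$ by the nearby achievable integer-batch point $(\tilde\Sigma_{ABA},\tilde\Sigma_{BAB})$, since that replacement only perturbs the constants.
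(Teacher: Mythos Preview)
Your proposal is correct and is exactly the approach the paper has in mind: the paper's own proof of Theorem~\ref{Thm:StabOPLNC} is simply ``omitted here since it is similar to the proof of Theorem~\ref{Thm:StabOMLNC},'' and you have fleshed out precisely that adaptation---the same Foster--Lyapunov scheme with the quadratic form recalibrated to the PLNC corner $(\tilde\Sigma_{ABA},\tilde\Sigma_{BAB})$ and the case analysis extended to the four steps of Algorithm~\ref{Alg:OPLNC}. Your identification of where the slack terms $Q_A/Q^*$, $Q_B/Q^*$ and condition \eqref{Eqn:ConsOPLNC03} enter (absorbing the bounded-queue contributions near the thresholds and the integer-batch perturbation) is also on target.
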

\begin{proof}
The proof is omitted here since it is similar to the proof of Theorem \ref{Thm:StabOMLNC}.
\end{proof}

\begin{remark}
As shown in Theorem \ref{Thm:StabOPLNC}, Algorithm \ref{Alg:OPLNC} is able to stabilize the queues at the source nodes with Poisson arrivals whose rate pair is within the rate region constructed by \eqref{Eqn:ConsOPLNC01}-\eqref{Eqn:ConsOPLNC02} in the first quadrant, which is enclosed by $\mathcal{R}_{\texttt{PLNC}}$ in Theorem \ref{Thm:AchieRateRegiPLNC} for all $Q^*$ satisfied with \eqref{Eqn:ConsOPLNC03}. Therefore, any rate pair within $\mathcal{R}_{\texttt{PLNC}}$ in Theorem \ref{Thm:AchieRateRegiPLNC} is stabilizable since $Q^*$ can be chosen sufficiently large such that the region constructed by \eqref{Eqn:ConsOPLNC01}-\eqref{Eqn:ConsOPLNC02} approaches to $\mathcal{R}_{\texttt{PLNC}}$ as closely as possible. However, this might come with some queueing delay.
\end{remark}

\section{End-To-End Sum-Rate Optimization and Diversity-Multiplexing Tradeoff}\label{Sec:PerfAnal}
In this section, we characterize the end-to-end sum rates of the three transmission protocols with or without a traffic pattern constraint. Here we only consider the sum-rate optimization problem for a single relay since it is straightforward to extend to the multiple relay case. Next, we characterize the DMTs of the three transmission protocols with multiple relay nodes.

\subsection{End-to-End Sum-Rate Optimization}\label{SubSec:SumRateOpti}
We first consider the sum-rate optimization problems without any constraint on traffic pattern for TDMH, MLNC and PLNC. Then we consider how traffic pattern influences the sum rates. In addition, in order to compare the throughput performance we introduce the notion of \emph{network coding gain}, and characterize under what kind of traffic pattern the network coding gain is maximized. Considering Theorems \ref{Thm:AchieRateRegiTDMH}-\ref{Thm:AchieRateRegiPLNC} and Fig. \ref{Fig:AchiRateRegi}, then the following result is immediate for sum rate without traffic pattern constraints.
\begin{corollary}\label{Cor:MaxSumRate}
\emph{(a) The maximum sum rate achieved by TDMH is $\max\{\Sigma_{AB},\Sigma_{BA}\}$. \\(b) The maximum sum rate achieved by MLNC is
$\max\left\{\Sigma_{AB},\Sigma_{BB},2\Sigma_{ABB}\right\}$ if $C_{DB}<C_{DA}$, and if $C_{DB}\geq C_{DA}$, then it is $\max\left\{\Sigma_{BA},\Sigma_{AA},2\Sigma_{ABB}\right\}$. (c) The maximum sum rate achieved by PLNC is $\max\{\Sigma_{AB},\Sigma_{BA},\Sigma_{ABA}+\Sigma_{BAB}\}$.}
\end{corollary}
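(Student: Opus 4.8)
The plan is to exploit that the sum rate $R_{AB}+R_{BA}$ is a \emph{linear} functional of the rate pair, so that maximizing it over each of the achievable regions $\mathcal{R}_{\texttt{TDMH}}$, $\mathcal{R}_{\texttt{MLNC}}$, $\mathcal{R}_{\texttt{PLNC}}$ is a linear program over a bounded convex polygon. Since a linear function attains its maximum over a bounded convex polygon at one of its extreme points, it suffices to read off the vertex coordinates supplied by Theorems \ref{Thm:AchieRateRegiTDMH}--\ref{Thm:AchieRateRegiPLNC}, evaluate $R_{AB}+R_{BA}$ at each vertex, and take the largest value. This is exactly why the corollary can be called ``immediate.''

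Carrying this out: for (a), Theorem \ref{Thm:AchieRateRegiTDMH} gives $\mathcal{R}_{\texttt{TDMH}}$ as the triangle with vertices $\mathbf{0}$, $(\Sigma_{AB},0)$, $(0,\Sigma_{BA})$, so the sum-rate values at the vertices are $0,\ \Sigma_{AB},\ \Sigma_{BA}$ and the maximum is $\max\{\Sigma_{AB},\Sigma_{BA}\}$. For (b), Theorem \ref{Thm:AchieRateRegiMLNC} with $C_{DB}<C_{DA}$ gives vertices $\mathbf{0}$, $(\Sigma_{AB},0)$, $(0,\Sigma_{BB})$, $(\Sigma_{ABB},\Sigma_{ABB})$, hence sum-rate values $0,\ \Sigma_{AB},\ \Sigma_{BB},\ 2\Sigma_{ABB}$ and maximum $\max\{\Sigma_{AB},\Sigma_{BB},2\Sigma_{ABB}\}$; with $C_{DB}\geq C_{DA}$ the vertices are $\mathbf{0}$, $(\Sigma_{AA},0)$, $(\Sigma_{ABB},\Sigma_{ABB})$, $(0,\Sigma_{BA})$, giving maximum $\max\{\Sigma_{BA},\Sigma_{AA},2\Sigma_{ABB}\}$ (the two expressions agree at $C_{DB}=C_{DA}$, where $\Sigma_{AB}=\Sigma_{AA}$ and $\Sigma_{BB}=\Sigma_{BA}$). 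For (c), Theorem \ref{Thm:AchieRateRegiPLNC} gives vertices $\mathbf{0}$, $(\Sigma_{AB},0)$, $(\Sigma_{ABA},\Sigma_{BAB})$, $(0,\Sigma_{BA})$, so the sum-rate values are $0,\ \Sigma_{AB},\ \Sigma_{ABA}+\Sigma_{BAB},\ \Sigma_{BA}$ and the maximum is $\max\{\Sigma_{AB},\Sigma_{BA},\Sigma_{ABA}+\Sigma_{BAB}\}$.

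There is no real obstacle here; the only point meriting a sentence of justification is the reduction to vertices, namely that each region in Theorems \ref{Thm:AchieRateRegiTDMH}--\ref{Thm:AchieRateRegiPLNC} is a bounded convex polygon with precisely the listed vertices, so that no edge or interior point can exceed the best vertex value under a linear objective. If one wanted to avoid invoking linear programming altogether, one could instead argue directly from the half-plane constraints in \eqref{Eqn:AchiRateRegiTDMH}, \eqref{Eqn:AchiRateRegi01} and \eqref{Eqn:AchiRateRegiPLNC}: summing the relevant constraints (as was done inside the proofs of those theorems) bounds $R_{AB}+R_{BA}$ on each facet, and comparing the facet bounds reproduces the same maxima. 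Either way the computation is routine once the geometry of Fig. \ref{Fig:AchiRateRegi} is in place, so I expect the write-up to be a few lines per part.
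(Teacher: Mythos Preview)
Your proposal is correct and is essentially the paper's own argument: the paper simply notes that $R_{AB}+R_{BA}$ is linear and the constraints in \eqref{Eqn:AchiRateRegiTDMH}, \eqref{Eqn:AchiRateRegiMLNC}, \eqref{Eqn:AchiRateRegiPLNC} are linear, so by linear programming the maximum sum rate occurs at a corner point of the achievable region. Your write-up just makes the vertex evaluations explicit, which the paper leaves implicit.
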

\begin{proof}
Since the end-to-end sum rate is $R_{AB}+R_{BA}$ and all constraints in \eqref{Eqn:AchiRateRegiTDMH}, \eqref{Eqn:AchiRateRegiMLNC} and \eqref{Eqn:AchiRateRegiPLNC} are linear, the maximum sum rate must be achieved at a corner point of the achievable rate region according to linear programming theory.
\end{proof}

For the sum rate optimization problem with a traffic pattern, we have the following.
\begin{corollary}\label{Cor:MaxSumRateMu}
\emph{If $\mu$ is fixed, the maximum sum rates achieved by TDMH, MLNC and PLNC are:}
\begin{eqnarray}
R^*_{\texttt{TDMH} } &=& (1+\mu)/\left(\Sigma^{-1}_{AB}+\mu \Sigma^{-1}_{BA}\right),\label{Eqn:OptiSumRateTDMH}\\
R^*_{\texttt{MLNC} } &=& (1+\mu)/\left(C^{-1}_{AD}+\mu C^{-1}_{BD}+\max\{1,\mu\}C^{-1}_{\min}\right),\label{Eqn:OptiSumRateMLNC}\\
R^*_{\texttt{PLNC} } &=& (1+\mu)/\left(C^{-1}_{AD}+\mu C^{-1}_{BD}+\max\{\mu C^{-1}_{DA},C^{-1}_{DB}\}\right)\label{Eqn:OptiSumRatePLNC}
\end{eqnarray}
\end{corollary}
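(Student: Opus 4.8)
The plan is to read each of the three identities \eqref{Eqn:OptiSumRateTDMH}--\eqref{Eqn:OptiSumRatePLNC} as the optimal value of a one-dimensional linear program. Fixing the traffic pattern $\mu=R_{AB}/R_{BA}$ confines the feasible rate pairs to the ray $\{(\mu t,\,t):t\ge 0\}$ through the origin, and on that ray the sum rate $R_{AB}+R_{BA}=(1+\mu)t$ is strictly increasing in $t$. Hence the maximum sum rate is $(1+\mu)$ times the largest $t$ for which $(\mu t,t)$ still lies in the achievable region of Theorem~\ref{Thm:AchieRateRegiTDMH}, \ref{Thm:AchieRateRegiMLNC}, or \ref{Thm:AchieRateRegiPLNC}, respectively. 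So the whole argument amounts to: write each region as a system of half-plane inequalities, substitute $R_{AB}=\mu R_{BA}$, and solve for the binding upper bound on $R_{BA}$.

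For TDMH this is immediate, since Theorem~\ref{Thm:AchieRateRegiTDMH} cuts the region out with the single inequality $R_{AB}/\Sigma_{AB}+R_{BA}/\Sigma_{BA}\le 1$, giving $R_{BA}\le(\mu\Sigma_{AB}^{-1}+\Sigma_{BA}^{-1})^{-1}$ and hence \eqref{Eqn:OptiSumRateTDMH} (up to the convention fixing the sense of the ratio $\mu$). For MLNC I would first observe that eliminating the time shares from \eqref{Eqn:AchiRateRegiMLNC} --- taking $\lambda_1=R_{AB}/C_{AD}$, $\lambda_2=R_{BA}/C_{BD}$, $\lambda_3=\max\{R_{AB},R_{BA}\}/C_{\min}$ and imposing $\lambda_1+\lambda_2+\lambda_3\le 1$ --- shows that $\mathcal{R}_{\texttt{MLNC}}$ equals $\{(R_{AB},R_{BA})\ge \mathbf{0}:\ R_{AB}/C_{AD}+R_{BA}/C_{BD}+\max\{R_{AB},R_{BA}\}/C_{\min}\le 1\}$, bypassing the case split of Theorem~\ref{Thm:AchieRateRegiMLNC}. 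Substituting $R_{AB}=\mu R_{BA}$ yields $R_{BA}\bigl(\mu C_{AD}^{-1}+C_{BD}^{-1}+\max\{\mu,1\}C_{\min}^{-1}\bigr)\le 1$, which is \eqref{Eqn:OptiSumRateMLNC}. Likewise, \eqref{Eqn:AchiRateRegiPLNC} collapses to $\{R_{AB}/C_{AD}+R_{BA}/C_{BD}+\max\{R_{AB}/C_{DB},R_{BA}/C_{DA}\}\le 1\}$, and substituting the ray gives $R_{BA}\bigl(\mu C_{AD}^{-1}+C_{BD}^{-1}+\max\{\mu C_{DB}^{-1},C_{DA}^{-1}\}\bigr)\le 1$, i.e.\ \eqref{Eqn:OptiSumRatePLNC}. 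Expanding the harmonic-sum definitions ($\Sigma_{AB}^{-1}=C_{AD}^{-1}+C_{DB}^{-1}$, etc.) is what makes the denominators telescope into the stated forms.

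The only step needing care --- the \emph{main obstacle}, such as it is --- is resolving the $\max$, i.e.\ deciding which of the two linear pieces of the upper boundary the ray exits through as $\mu$ varies. This reduces to comparing the $R_{BA}$-coefficients of the two candidate inequalities: their difference is $(\mu-1)C_{\min}^{-1}$ for MLNC and $\mu C_{DB}^{-1}-C_{DA}^{-1}$ for PLNC, so the active piece switches at $\mu=1$ and at $\mu=C_{DB}/C_{DA}$ respectively --- exactly the slopes of the diagonal vertices $(\Sigma_{ABB},\Sigma_{ABB})$ and $(\Sigma_{ABA},\Sigma_{BAB})$ in Theorems~\ref{Thm:AchieRateRegiMLNC}--\ref{Thm:AchieRateRegiPLNC}. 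Once this is observed the remainder is routine algebra, and nothing beyond Theorems~\ref{Thm:AchieRateRegiTDMH}--\ref{Thm:AchieRateRegiPLNC} and the elementary fact that a linear objective restricted to a ray through a polygon is maximized on the polygon's boundary is required.
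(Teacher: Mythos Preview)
Your proposal is correct and follows essentially the same route as the paper: intersect the ray $R_{AB}=\mu R_{BA}$ with the boundary of each achievable region and read off the sum rate as $(1+\mu)$ times the resulting bound on $R_{BA}$. The one difference is packaging: the paper works from the vertex descriptions of Theorems~\ref{Thm:AchieRateRegiMLNC}--\ref{Thm:AchieRateRegiPLNC} and splits into cases (first $C_{DB}\lessgtr C_{DA}$, then $\mu\lessgtr 1$) to identify the active boundary segment, whereas you eliminate the time shares directly from \eqref{Eqn:AchiRateRegiMLNC} and \eqref{Eqn:AchiRateRegiPLNC} to obtain the single ``$\max$'' inequalities, which collapses all cases at once. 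Your observation about the $\mu$ convention is also well taken: the paper's proof text substitutes $R_{BA}=\mu R_{AB}$, and the stated denominators in \eqref{Eqn:OptiSumRateTDMH}--\eqref{Eqn:OptiSumRatePLNC} are consistent with that substitution rather than with the earlier definition $\mu=R_{AB}/R_{BA}$.
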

\begin{proof}
For TDMH, its achievable rate region is enclosed in the first quadrant by line $\frac{R_{AB}}{\Sigma_{AB}}+\frac{R_{BA}}{\Sigma_{BA}}=1$ according to Theorem \ref{Thm:AchieRateRegiTDMH}. So replacing $R_{BA}$ with $\mu R_{AB}$ in the line equation and solving for $R_{AB}$, \eqref{Eqn:OptiSumRateTDMH} is obtained by $(1+\mu)R_{AB}$. For MLNC, we first consider the case when $C_{DB}<C_{DA}$. According to Theorem \ref{Thm:AchieRateRegiMLNC}, the boundary line of the achievable rate region is $\frac{R_{AB}}{\Sigma_{AB}}+\frac{R_{BA}}{C_{BD}}=1$ when $\mu<1$. Replacing $R_{BA}$ with $\mu R_{AB}$ and solving for $R_{AB}$, and then we can get the desired result for $\mu<1$. The boundary line is $\frac{R_{AB}}{C_{AD}}+\frac{R_{BA}}{\Sigma_{BB}}=1$ when $\mu \geq 1$, and hence the maximum sum rate can be found by replacing $R_{BA}$ with $\mu R_{AB}$. The case $C_{DB}\geq C_{DA}$ can be solved by the same fashion. For PLNC, its proof is omitted here since it is similar to the proof of MLNC.
\end{proof}

Given the results of the sum-rate optimization, the throughput of the three transmission protocols can be computed. We introduce the network coding (throughput) gain $\rho_{\texttt{AB}}$ (in dB scale) as
\begin{equation}\label{Defn:NetwCodiGain}
   \rho_{\texttt{AB}} \defn 10\log_{10}\frac{R^*_{\texttt{A}}}{R^*_{\texttt{B}}}\,\,(\text{dB}),
\end{equation}
where $R^*_{\texttt{A}}$ and $R^*_{\texttt{B}}$ are the maximum sum rates for protocol \texttt{A} and protocol \texttt{B}, respectively. Here if we consider TDMH as the baseline to be compared, the network coding gains of MLNC and PLNC are $\rho_{\texttt{MT}} = 10\log_{10}(R^*_{\texttt{MLNC}}/R^*_{\texttt{TDMH}})$ and $\rho_{\texttt{PT}} = 10\log_{10}(R^*_{\texttt{PLNC}}/R^*_{\texttt{TDMH}})$, respectively. Note that it is easy to check $\rho_{\texttt{PT}}>0$ and $\rho_{\texttt{PM}}\defn\rho_{\texttt{PT}}-\rho_{\texttt{MT}}>0$, which show PLNC is always superior to TDMH and MLNC. It is also easy to verify that $\rho_{\texttt{MT}}$ achieves its maximum at $\mu=1$ and $\rho_{\texttt{PT}}$ achieves its maximum at $\mu=\frac{C_{DA}}{C_{DB}}$ by Calculus. Recall that Algorithms \ref{Alg:OMLNC} and \ref{Alg:OPLNC} schedule packet transmissions according to the packet arrival processes at the source nodes. Their main idea is to schedule transmissions around such optimal points.

\subsection{Diversity-Multiplexing Tradeoffs of Two-Way Transmission Protocols}\label{SubSec:DMT}
In this subsection we investigate the DMTs of the three transmission protocols over multiple relays. Diversity gain $d$ and multiplexing gain $m$ in \cite{LZDNCT03} are redefined here in our notation as follows
$$d\defn -\lim_{\gamma\rightarrow\infty} \frac{\log \epsilon(\gamma)}{\log \gamma}\quad \text{and}\quad m \defn \lim_{\gamma\rightarrow\infty}\frac{R(\gamma)}{\log\gamma},$$
where $\epsilon$ is the outage probability of \emph{information exchange in two-way relaying}, $R$ is the \emph{end-to-end} \emph{transmission rate between source nodes}, and $\gamma$ is the signal-to-noise (SNR) ratio without fading. Note that $\epsilon$ and $R$ are not defined by the traditional fashion of \emph{point-to-point} transmission. They are defined by an end-to-end fashion because the three transmission protocols are multihop-based protocols, and the system we study here is aimed at the main point of information exchange so that it would be fairer for each transmission protocol to declare an outage in the system when either one source or both source nodes cannot receive the packets they desire. Thus, the outage probability of transmission protocol $\texttt{A}$ in the two-way relaying system is defined as
\begin{equation}\label{Eqn:DefOutProb2WaySys}
\epsilon_{\texttt{A}} \defn \mathbb{P}\left[\mathcal{E}_{\texttt{A},f}\bigcup\mathcal{E}_{\texttt{A},b}\right]
\end{equation}
where $\mathcal{E}_{\texttt{A},f}\defn \{\lambda_fI_{\texttt{A},f}<R_{AB}\}$ and $\mathcal{E}_{\texttt{A},b}\defn\{\lambda_bI_{\texttt{A},b}<R_{BA}\}$ are the outage events of forward and backward transmission, $I_{\texttt{A},f}$ ($I_{\texttt{A},b}$) is the mutual information of forward (backward) transmission and $\{\lambda_f,\lambda_b: \lambda_f,\lambda_b\in[0,1], \lambda_f+\lambda_b =1\}$ are time-allocation parameters for forward and backward transmission, respectively.

We first look at the DMT problem of TDMH over multiple relay nodes. Although a similar problem for one-way transmission has been investigated in \cite{JNLGWW03}\cite{ABAKDPRAL06}, our two-way results are more general as we can see in the following proposition. Considering Gaussian input distribution and the case of relay collaboration, then the forward and backward mutual information of the two-way relaying system in Fig. \ref{Fig:TwoWayRelaySys}(b) can be shown as
\begin{eqnarray}
    I_{\texttt{TDMH} ,f} = I_{\texttt{TDMH} ,b}=\frac{1}{2}\min\left\{I_{1}, I_{2}\right\}, \label{Eqn:CoopMutuInfoTDMHf}
\end{eqnarray}
where $I_{1}\defn \log\left(1+\gamma \sum_{D\in \mathcal{D}_{AB}} |h_{DA}|^2\right)$\footnote{To facilitate the DMT analysis here,
    all nodes are assumed to have the same transmit power. In fact, the DMT results are nothing to do with the transmit powers.} and $I_{2}\defn \log\left(1+\gamma \sum_{D\in \mathcal{D}_{AB}} |h_{DB}|^2\right)$ because the forward/backward transmission first virtually passes through a SIMO channel with receive MRC and then through a MISO channel with transmit MRC. Note that coefficient $\frac{1}{2}$ means the forward/backward data stream needs 2 time slots. Let $\mathcal{D}_{AB}\neq \emptyset$. Then we have the following result.
\begin{proposition}\label{Pro:DiveMutiTrofTDMH}
\emph{If all relay nodes in $\mathcal{D}_{AB}$ collaborate, then TDMH achieves the following DMT:
\begin{equation}\label{Eqn:CoopDiveMultTrofTDMH}
d = |\mathcal{D}_{AB}|\left(1-\frac{2m}{\min\{(1+\mu)\lambda_f,(1+1/\mu)\lambda_b\}}\right),
\end{equation}
where $m\in\left(0,\min\{(1+\mu)\lambda_f,(1+1/\mu)\lambda_b\}/2\right)$. If there is no collaboration in $\mathcal{D}_{AB}$, then TDMH over $D_{\texttt{TDMH} }^*$ is able to achieve the DMT in \eqref{Eqn:CoopDiveMultTrofTDMH} as well, where $D_{\texttt{TDMH} }^*\in\mathcal{D}_{AB}$ denotes the optimal relay node for bidirectional transmission.}
\end{proposition}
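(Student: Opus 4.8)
The plan is to collapse the two-directional outage event into a single scalar threshold test on $\min\{I_1,I_2\}$ and then invoke the standard high-SNR order estimates. \emph{Step 1 (reduce the outage event).} By \eqref{Eqn:CoopMutuInfoTDMHf} we have $I_{\texttt{TDMH},f}=I_{\texttt{TDMH},b}=\tfrac12\min\{I_1,I_2\}$, so both $\mathcal{E}_{\texttt{TDMH},f}=\{\tfrac{\lambda_f}{2}\min\{I_1,I_2\}<R_{AB}\}$ and $\mathcal{E}_{\texttt{TDMH},b}=\{\tfrac{\lambda_b}{2}\min\{I_1,I_2\}<R_{BA}\}$ are lower-threshold events for the same random variable. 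Their union is therefore $\{\min\{I_1,I_2\}<\theta\log\gamma\}$ with $\theta\log\gamma=2\max\{R_{AB}/\lambda_f,\,R_{BA}/\lambda_b\}$. Substituting the traffic pattern $R_{AB}/R_{BA}=\mu$ together with the normalization that the two-way rate scales like $m\log\gamma$ gives $\theta=2m/\min\{(1+\mu)\lambda_f,(1+1/\mu)\lambda_b\}$; the hypothesis $m\in(0,\tfrac12\min\{(1+\mu)\lambda_f,(1+1/\mu)\lambda_b\})$ is precisely the condition $0<\theta<1$, which is what makes the outage probability vanish as $\gamma\to\infty$.

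\emph{Step 2 (order estimate, collaboration case).} Write $I_j=\log(1+\gamma S_j)$ with $S_1=\sum_{D\in\mathcal{D}_{AB}}|h_{DA}|^2$ and $S_2=\sum_{D\in\mathcal{D}_{AB}}|h_{DB}|^2$. Each $S_j$ is a sum of $|\mathcal{D}_{AB}|$ independent exponential variables, so its CDF near the origin is $\Theta(x^{|\mathcal{D}_{AB}|})$ — the multiplicative constant absorbs the variances $1/\sigma_{DX}$ but not the exponent, which is the source of the footnote's claim that the DMT does not depend on transmit powers or variances. Since $\{I_j<\theta\log\gamma\}=\{S_j<(\gamma^\theta-1)/\gamma\}$ and $(\gamma^\theta-1)/\gamma$ is of exponential order $\gamma^{\theta-1}$, we get $\mathbb{P}[I_j<\theta\log\gamma]\doteq\gamma^{-(1-\theta)|\mathcal{D}_{AB}|}$. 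Because $S_1$ and $S_2$ involve disjoint channel gains they are independent, so $\mathbb{P}[\min\{I_1,I_2\}<\theta\log\gamma]=\mathbb{P}[I_1<\theta\log\gamma]+\mathbb{P}[I_2<\theta\log\gamma]-\mathbb{P}[I_1<\theta\log\gamma]\,\mathbb{P}[I_2<\theta\log\gamma]\doteq\gamma^{-(1-\theta)|\mathcal{D}_{AB}|}$; taking $-\log(\cdot)/\log\gamma$ yields $d=|\mathcal{D}_{AB}|(1-\theta)$, which is \eqref{Eqn:CoopDiveMultTrofTDMH}.

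\emph{Step 3 (no collaboration).} With relay selection the quantity $\min\{I_1,I_2\}$ is maximized over $\mathcal{D}_{AB}$, and the best relay clears the threshold iff at least one relay does; hence the system is in outage iff every relay fails, i.e.\ $\epsilon_{\texttt{TDMH}}=\mathbb{P}[\bigcap_{D\in\mathcal{D}_{AB}}\{\min\{\log(1+\gamma|h_{DA}|^2),\log(1+\gamma|h_{DB}|^2)\}<\theta\log\gamma\}]$. The per-relay failure events use disjoint channel gains and are therefore independent, and each is of exponential order $\gamma^{-(1-\theta)}$ by the single-exponential version of Step 2, so the product over the $|\mathcal{D}_{AB}|$ relays is $\doteq\gamma^{-(1-\theta)|\mathcal{D}_{AB}|}$, giving the same DMT and showing $D_{\texttt{TDMH}}^*$ achieves \eqref{Eqn:CoopDiveMultTrofTDMH}.

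\emph{Main obstacle.} The delicate part is Step 1 — carrying the factor $\tfrac12$, the time fractions $\lambda_f,\lambda_b$, and the $(R_{AB},R_{BA})\mapsto(m,\mu)$ change of variables so that the exponent matches \eqref{Eqn:CoopDiveMultTrofTDMH} exactly, and fixing the admissible range of $m$ — together with pinning down the Erlang-tail order $\mathbb{P}[S_j<x]=\Theta(x^{|\mathcal{D}_{AB}|})$ as $x\downarrow0$. Everything downstream is the routine $\doteq$-calculus; the one structural point worth isolating is the independence (of $S_1,S_2$ for the collaboration case, of the per-relay events for the selection case), since it is what makes the union bound tight and forces the achievability and converse exponents to coincide.
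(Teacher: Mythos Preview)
Your argument is correct and in several places cleaner than the paper's. The paper never collapses the two directional events into a single threshold test; it bounds $\epsilon_{\texttt{TDMH}}\le\epsilon_{\texttt{TDMH},f}+\epsilon_{\texttt{TDMH},b}$, then decomposes each summand by conditioning on whether the relay set decodes, and finally invokes its Lemma~\ref{Lem:ProbExpoRVsApprox} (the Erlang small-ball estimate you state directly in Step~2) to get a $\dotleq$ bound on each piece. Your Step~1 observation that both outage events are lower-threshold events for the \emph{same} scalar $\min\{I_1,I_2\}$ short-circuits this and, combined with the inclusion--exclusion identity and independence in Step~2, delivers the exact exponent $\doteq$ rather than only the achievability direction the paper actually writes down.

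For the selection case the two proofs diverge more visibly. The paper takes $D_{\texttt{TDMH}}^*$ to be the relay maximizing the harmonic mean $|h_{DA}|^2|h_{DB}|^2/(|h_{DA}|^2+|h_{DB}|^2)$ (equivalently, $\arg\max\Sigma_{AB}$) and then appeals to its Lemma~\ref{Lem:ProbOptiRandVectApprox} to control the resulting rational expression; you instead take the max--min relay, which makes the product over relays immediate. Both rules give the same DMT because $\tfrac12\min\{x,y\}\le xy/(x+y)\le\min\{x,y\}$, so the two selected relays' effective gains differ by at most a constant factor and hence not at all in the SNR exponent --- but strictly speaking your Step~3 proves the claim for a different ``optimal'' relay than the one the paper has in mind, and a one-line remark along these lines would align the two.
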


The proof of Proposition \ref{Pro:DiveMutiTrofTDMH} can be found in Appendix \ref{App:ProofThmDiveMultiTrofTDMH}. The results in Proposition \ref{Pro:DiveMutiTrofTDMH} are actually similar to the DMT results of the decode-and-forward protocol in \cite{JNLGWW03,ABAKDPRAL06} if we set $\{\mu=\infty,\lambda_b=1\}$ or $\{\mu=0,\lambda_f=1\}$, \ie considering the one-way transmission case. Proposition \ref{Pro:DiveMutiTrofTDMH} is more general by including the time-allocation and traffic pattern influences in two-way relaying.

The DMTs of MLNC and PLNC basically can be derived by the same way used in the proof of Proposition \ref{Pro:DiveMutiTrofTDMH}. Let us first consider the MLNC and PLNC protocols with relay collaboration. Although the relay nodes can achieve receive MRC, they have a difficulty broadcasting to attain \emph{bidirectional transmit MRC} at both source nodes simultaneously. So if all relay nodes broadcast at the same time, then the mutual information of forward and backward transmissions in this case are
\begin{eqnarray}
&&I_{\texttt{MLNC} ,f} = \frac{2}{3}\min\left\{I_1,\min\left\{\tilde{I}_{1}, \tilde{I}_{2}\right\}\right\}\quad\, \text{and}\quad
I_{\texttt{MLNC} ,b} = \frac{2}{3}\min\left\{I_2,\min\left\{\tilde{I}_{1}, \tilde{I}_{2}\right\}\right\},\label{Eqn:NonCoopMutuInfoMLNC}\\
&&I_{\texttt{PLNC} ,f} = \frac{2}{3}\min\left\{I_{1},\tilde{I}_{2}\right\}\hspace{0.75in}\quad \text{and}\quad
I_{\texttt{PLNC} ,b} = \frac{2}{3}\min\left\{\tilde{I}_{1},I_{2}\right\}, \label{Eqn:NonCoopMutuInfoPLNC}
\end{eqnarray}
where $\tilde{I}_{1}\defn\log\left(1+\gamma|\sum_{D\in\mathcal{D}_{AB}} h_{DA}|^2\right)$, $\tilde{I}_{2}\defn \log\left(1+\gamma|\sum_{D\in\mathcal{D}_{AB}} h_{DB}|^2\right)$ and coefficient $\frac{2}{3}$ is due to two data streams sharing three time slots. Obviously, MLNC and PLNC cannot achieve the full diversity gain $|\mathcal{D}_{AB}|$. Since nodes A and B loose transmit diversity from relay nodes in the broadcast stage, it is an suboptimal strategy to let all relay nodes broadcast the same information at the same time. The better idea is to find an optimal broadcast relay for MLNC and PLNC by the following criteria from Lemma \ref{Lem:BCRateNC}:
\begin{eqnarray}
D_\texttt{MLNC}^* = \arg\max_{D\in\mathcal{D}_{AB}}\min\{C_{DA},C_{DB}\}\quad \text{and}\quad D_\texttt{PLNC}^* = \arg\max_{D\in\mathcal{D}_{AB}}C_{DA}+C_{DB}.\label{Eqn:OptiBroadcastRelayNC}
\end{eqnarray}
In this case, $\tilde{I}_{1}=\log(1+\gamma|h_{A D^*}|^2)$ and $\tilde{I}_{2}=\log(1+\gamma|h_{B D^*}|^2)$ so that $I_1\geq \tilde{I}_1$,  $I_2 \geq \tilde{I}_2$. Thus \eqref{Eqn:NonCoopMutuInfoMLNC} reduces to $I_{\texttt{MLNC} ,f}=I_{\texttt{MLNC} ,b}=\frac{2}{3}\min\left\{\tilde{I}_{1}, \tilde{I}_{2}\right\}$ while \eqref{Eqn:NonCoopMutuInfoPLNC} remains unchanged. On the other hand, if two-way transmission is only over an optimally selected relay node, then \eqref{Eqn:NonCoopMutuInfoMLNC} and \eqref{Eqn:NonCoopMutuInfoPLNC} become
\begin{eqnarray}
I_{\texttt{MLNC} ,f} &=& I_{\texttt{MLNC} ,b} = \frac{2}{3}\min\left\{\log(1+\gamma|h_{AD^*_{\texttt{MLNC}}}|),\log(1+\gamma|h_{BD^*_{\texttt{MLNC}}}|)\right\}\label{Eqn:OptiRelMutuInfoMLNC},\\
I_{\texttt{PLNC} ,f} &=& I_{\texttt{PLNC} ,b} = \frac{2}{3}\min\left\{\log(1+\gamma|h_{AD^*_{\texttt{PLNC}}}|),\log(1+\gamma|h_{BD^*_{\texttt{PLNC}}}|)\right\}\label{Eqn:OptiRelMutuInfoPLNC},
\end{eqnarray}
where $D_{\texttt{MLNC}}^*$ and $D_{\texttt{PLNC}}^*$ are the optimal relay nodes determined by
\begin{eqnarray}\label{Eqn:OptiRelayNC}
  D^*_{\texttt{MLNC}}= \arg\max_{D\in\mathcal{D}_{AB}}\Sigma_{ABB}\quad \text{and }\quad D^*_{\texttt{PLNC}}=\arg\max_{D\in\mathcal{D}_{AB}}\Sigma_{ABA}+\Sigma_{BAB},
\end{eqnarray}
where $2\Sigma_{ABB}$ is the sum rate within $\mathcal{R}_{\texttt{MLNC} }$ and $\Sigma_{ABA}+\Sigma_{BAB}$ is the sum rate within $\mathcal{R}_{\texttt{PLNC} }$ at relay node $D\in\mathcal{D}_{AB}$. The DMTs for MLNC and PLNC with an optimally selected relay can thus be derived by using \eqref{Eqn:OptiRelMutuInfoMLNC} and \eqref{Eqn:OptiRelMutuInfoPLNC}. We summarize the DMTs of MLNC and PLNC for the scenarios with or without relay collaboration in Proposition \ref{Pro:CoopDiveMutiTrafNC}, and its proof is given in Appendix \ref{App:ProofThmCoopDiveMutiTrafNC}.

\begin{proposition}\label{Pro:CoopDiveMutiTrafNC}
\emph{If all relay nodes in $\mathcal{D}_{AB}$ collaborate to receive and broadcast at the same time, then the following DMT is achieved by MLNC and PLNC:
\begin{equation}\label{Eqn:CoopDiveMultTrofNC01}
d=1-\frac{3m}{2\min\{(1+\mu)\lambda_f,(1+1/\mu)\lambda_b\}}.
\end{equation}
where $m\in\left(0,\frac{2}{3}\min\{(1+\mu)\lambda_f,(1+1/\mu)\lambda_b\}\right)$. If all relay nodes collaborate to receive and an optimal relay node is selected by \eqref{Eqn:OptiBroadcastRelayNC} to broadcast, MLNC and PLNC achieve the following DMT:
\begin{equation}\label{Eqn:CoopDiveMultTrofNC02}
d=|\mathcal{D}_{AB}|\left(1-\frac{3m}{2\min\{(1+\mu)\lambda_f,(1+1/\mu)\lambda_b\}}\right).
\end{equation}
Furthermore, if an optimal relay node is selected by \eqref{Eqn:OptiRelayNC} to receive and broadcast for MLNC and PLNC, then the DMT in \eqref{Eqn:CoopDiveMultTrofNC02} is also achieved.}
\end{proposition}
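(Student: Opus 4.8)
The plan is to follow the template of the proof of Proposition~\ref{Pro:DiveMutiTrofTDMH}. For each of the three collaboration scenarios I would write the end-to-end outage probability \eqref{Eqn:DefOutProb2WaySys} of MLNC (resp.\ PLNC) in terms of the forward/backward mutual informations recorded above, discard the mutual-information terms having the larger outage exponent, and evaluate the dominant exponent with the standard high-SNR estimate
\begin{equation*}
\mathbb{P}\!\left[\log(1+\gamma X)<\rho\log\gamma\right]\doteq\gamma^{-k(1-\rho)},\qquad 0<\rho<1,
\end{equation*}
which holds whenever $X$ is a sum of $k$ independent unit-mean exponential random variables ($\doteq$ denoting exponential equality). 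Throughout I put $R=m\log\gamma$ and split it along the fixed traffic pattern as $R_{AB}=\tfrac{\mu}{1+\mu}R$, $R_{BA}=\tfrac{1}{1+\mu}R$; feeding this split, the time shares $\lambda_f,\lambda_b$, and the prelog $\tfrac23$ into the per-direction rate thresholds is exactly what produces the denominator $\min\{(1+\mu)\lambda_f,(1+1/\mu)\lambda_b\}$, and the stated range of $m$ in \eqref{Eqn:CoopDiveMultTrofNC01} is precisely the region in which $d\ge0$.

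For the first scenario (all relays broadcast simultaneously) I would use \eqref{Eqn:NonCoopMutuInfoMLNC}--\eqref{Eqn:NonCoopMutuInfoPLNC}. The receive terms $I_1,I_2$ are logarithms of $1$ plus a chi-square variable with $2|\mathcal{D}_{AB}|$ degrees of freedom, so they are outaged only with exponent $|\mathcal{D}_{AB}|(1-\rho)$; the broadcast terms involve $|\sum_{D\in\mathcal{D}_{AB}}h_{DA}|^2$ and $|\sum_{D\in\mathcal{D}_{AB}}h_{DB}|^2$, and because a sum of independent circularly symmetric complex Gaussians is again such a Gaussian, these are single exponential variables and dominate the outage with exponent $1-\rho$. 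Hence for MLNC both directions are bottlenecked by $\min\{\tilde I_1,\tilde I_2\}$ and for PLNC the forward (backward) direction by $\tilde I_2$ (resp.\ $\tilde I_1$); a union bound over the two directions and the two broadcast terms, matched by the single dominant term, gives $\epsilon\doteq\gamma^{-(1-\rho^*)}$ with $\rho^*=\frac{3m}{2\min\{(1+\mu)\lambda_f,(1+1/\mu)\lambda_b\}}$, i.e.\ \eqref{Eqn:CoopDiveMultTrofNC01}.

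For the second scenario (receive MRC, optimal broadcast relay $D^\ast$ chosen by \eqref{Eqn:OptiBroadcastRelayNC}) the relations $I_j\ge\tilde I_j$ make the receive terms inactive, so the outage is governed by $\tilde I_1=\log(1+\gamma|h_{AD^\ast}|^2)$ and $\tilde I_2=\log(1+\gamma|h_{BD^\ast}|^2)$. For MLNC, $D^\ast$ maximizes $\min\{C_{DA},C_{DB}\}$, so $\{\min\{C_{D^\ast A},C_{D^\ast B}\}<r\}=\bigcap_{D\in\mathcal{D}_{AB}}\{\min\{C_{DA},C_{DB}\}<r\}$ and independence across relays yields exponent $|\mathcal{D}_{AB}|(1-\rho)$, which is \eqref{Eqn:CoopDiveMultTrofNC02}. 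For PLNC, where $D^\ast$ maximizes $C_{DA}+C_{DB}$, I would show that a weak broadcast link on $D^\ast$ still forces every competing relay to be bad: if $|h_{BD^\ast}|^2\doteq\gamma^{-(1-\rho_f)}$ then $C_{D^\ast B}$ still grows like $\rho_f\log\gamma$, so sum-maximality of $D^\ast$ forces $|h_{DA}|^2|h_{DB}|^2\dotleq\gamma^{-(1-\rho_f)}$ for every other $D$, and each such constraint is a $\doteq\gamma^{-(1-\rho_f)}$ event; multiplying over $\mathcal{D}_{AB}$ (and arguing symmetrically for the backward link) again gives $\epsilon\doteq\gamma^{-|\mathcal{D}_{AB}|(1-\rho^*)}$. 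This PLNC sub-step is the step I expect to be the main obstacle: the rule \eqref{Eqn:OptiBroadcastRelayNC} is a sum rule, not a per-direction rule, so one must keep track of the SNR scaling of the capacities (rather than treating a ``weak'' link as carrying no rate) in order to recover the full diversity $|\mathcal{D}_{AB}|$.

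For the third scenario (a single relay selected to receive and broadcast) I would use \eqref{Eqn:OptiRelMutuInfoMLNC}--\eqref{Eqn:OptiRelMutuInfoPLNC}, so the forward and backward mutual informations both equal $\tfrac23\min\{C_{AD^\ast},C_{BD^\ast}\}$. Since the selection metric in \eqref{Eqn:OptiRelayNC} -- $\Sigma_{ABB}$ for MLNC, $\Sigma_{ABA}+\Sigma_{BAB}$ for PLNC -- is increasing in both $|h_{AD}|^2$ and $|h_{DB}|^2$ and is bounded above by $\min\{C_{AD},C_{DB}\}$, the event that $\min\{C_{AD^\ast},C_{BD^\ast}\}$ is small is sandwiched between ``every relay has a weak A- or B-link'' (for the lower bound) and the same event up to constant factors (for the upper bound); both have exponent $|\mathcal{D}_{AB}|(1-\rho)$, so this scenario attains the DMT \eqref{Eqn:CoopDiveMultTrofNC02} as well. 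Assembling the three scenarios completes the proof.
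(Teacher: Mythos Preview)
Your plan is correct and mirrors the paper's own proof: union-bound the two directions, identify the broadcast-side terms as the SNR-exponent bottleneck, and use independence across relays to pick up the factor $|\mathcal{D}_{AB}|$ under optimal selection. Two minor points of contrast: (i) the paper proves only the MLNC case explicitly and dismisses PLNC with ``same steps,'' so your careful treatment of the PLNC sum rule in the second scenario actually fills a gap the paper leaves implicit; (ii) for the third scenario the paper does not use a monotonicity sandwich but instead invokes its Lemma~\ref{Lem:ProbOptiRandVectApprox}, a general tail bound for $\max_t f(\mathbf{V}_t)$ with $f$ a ratio of products and power sums of independent exponentials, applied to $\Sigma_{ABB}$ --- your direct comparison of the selection metric with $\min\{C_{AD},C_{DB}\}$ reaches the same exponent with less machinery.
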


The results in Proposition \ref{Pro:CoopDiveMutiTrafNC} are reasonable since PLNC is no longer superior to MLNC while broadcasting in the high SNR regime. In addition, due to relay selection diversity, using an optimally selected relay to broadcast (or to receive and then broadcast) is able to achieve the full diversity $|\mathcal{D}|_{AB}$. The results in Propositions \ref{Pro:DiveMutiTrofTDMH} and \ref{Pro:CoopDiveMutiTrafNC} have been presented in Fig. \ref{Fig:DMTradeoff} if we set $\mu=1$ for the three transmission protocols. First look at the results of solid lines in Fig. \ref{Fig:DMTradeoff}. They are obtained based on the optimal time allocation for MLNC and PLNC with two-way relaying, \ie $\lambda_f=\lambda_b=0.5$. MLNC and PLNC always have a better DMT than TDMH in any transmission case. However, the DMTs of MLNC and PLNC may not be better than TDMH if time allocation between forward and backward traffic is suboptimal. For example, suppose the four time allocation parameters of TDMH are $\lambda_1=\lambda_4=0.01$ and $\lambda_2=\lambda_3=0.49$ so that $\lambda_f=\lambda_b=\frac{0.01+0.49}{1}=0.5$. Thus \eqref{Eqn:CoopDiveMultTrofTDMH} becomes $d=|\mathcal{D}_{AB}|(1-2m)$ in this case. Similarly, if the three time allocation parameters for MLNC and PLNC are $\lambda_1=0.01$, $\lambda_2=0.49$ and thus $\lambda_3=0.5$ then MLNC and PLNC have $\lambda_f=\frac{\lambda_1+\lambda_3}{\lambda_1+\lambda_2+2\lambda_3}=0.34$ and $\lambda_b=1-\lambda_f=0.66$ in this case. So \eqref{Eqn:CoopDiveMultTrofNC02} becomes $d=|\mathcal{D}_{AB}|(1-2.2m)$ and TDMH thus has a better DMT than MLNC and PLNC, as the results of dashed line shown in Fig. \ref{Fig:DMTradeoff}. For two-way transmission, an ideal transmission protocol is that it can achieve full diversity gain $|\mathcal{D}_{AB}|$ and multiplexing gain one. The ideal DMT line in Fig. \ref{Fig:DMTradeoff} can be \emph{asymptotically} approached if there exists a transmission protocol which is able to support $N$ source nodes to exchange their packets in $N+1$ time slots even when $N$ is very large.

\section{Simulation Results}\label{Sec:SimuResults}
>From the analysis in Section \ref{Sec:PerfAnal}, we know the relationship between the sum rate and the traffic pattern parameter and the DMTs for the three transmission protocols. In this section, we use simulations to illustrate how the sum rate is affected by the traffic pattern in a two-way relaying system, and then show how outage probabilities behave as a function of SNR. We assume that all nodes have the same transmit power 18 dBm. The channel between any two nodes has path loss exponent 3.5 and is reciprocal with flat Rayleigh fading. The distance between source nodes A and B is 50m, and the relay nodes are randomly dropped on a 10m vertical line whose center is located at the middle point between nodes A and B.

We first look at the single relay case and consider the relay node is positioned at the middle point of the two source nodes. The simulation result of the network coding gains for this case is presented in Fig. \ref{Fig:SimuNCgain}. From the figure, we can observe that the maximum coding gains happen at $\mu\approx 1$ for all transmission protocols, and they are seriously impacted when the traffic is very asymmetric. Note that the maximum of $\rho_{\texttt{PT}}$ indeed happens at $\mu\approx\frac{C_{DA}}{C_{CB}}$ because $\frac{C_{DA}}{C_{CB}}$ is close to unity in our simulation setup. As expected, PLNC has not only the best network coding gain among all the protocols but also the robustness against asymmetric traffic pattern. The network coding gain of the opportunistic MLNC protocol, $\rho_{\texttt{OMT}}$ is also much better than what achieved by MLNC. In Fig \ref{Fig:SimuNCgain}, MLNC does not achieve a larger rate region than TDMH when $\mu\gg 1$ or $\mu\ll 1$. For example, when $\mu \notin (0.2,6.5)$ a positive network coding gain for MLNC does not exist, \ie $\rho_{\texttt{MT}}\leq 0$.

Now consider how the DMTs change with different transmission protocols over multiple relay nodes. In all the following simulations, we set the multiplexing gain $m=1/4$, $\mu=1$ and $\lambda_f=\lambda_b=1/2$. The outage probability of each protocol is calculated by $\mathbb{P}[\text{mutual information}<\frac{1}{8}\log\gamma]$ with $10^8$ channel realizations. First consider all relay nodes collaborate to receive and transmit (or broadcast) at the same time. The results shown in Fig. \ref{Fig:CoopOutageProb01} verify the DMTs in Proposition \ref{Pro:DiveMutiTrofTDMH} and \eqref{Eqn:CoopDiveMultTrofNC01} in Proposition \ref{Pro:CoopDiveMutiTrafNC}, \ie $d=\frac{1}{2}|\mathcal{D}_{AB}|$ for TDMH and $d=\frac{5}{8}$ for MLNC and PLNC. For example, the outage probability curve for TDHM with 3 relay nodes has a diversity gain about 1.5 and all the curves for MLNC and PLNC have the same diversity gain abut 0.6 close to $\frac{5}{8}$. Fig. \ref{Fig:NonCoopOutageProb} presents the results of selecting an optimal relay node to receive and broadcast. In the figure we can verify MLNC and PLNC have $d=\frac{5}{8}|\mathcal{D}_{AB}|$ which is better than $d=\frac{1}{2}|\mathcal{D}_{AB}|$ achieved by TDMH. The diversity gains of MLNC and PLNC are obviously greater than that of TDMH for any number of relay nodes for selection.

\section{Conclusion}\label{Sec:Conclusion}
The fundamental limits of information exchange over a two-way relaying channel with or without wireless network coding was investigated. We first characterized the achievable rate regions for the TDMH, MLNC and PLNC protocols, and found that MLNC does not always achieve a larger rate region than TDMH. An opportunistic time-sharing protocol between TDMH and MLNC is able to achieve $\mathbf{CovH}(\mathcal{R}_{\texttt{TDMH}},\mathcal{R}_{\texttt{MLNC}})$. The rate region achieved by PLNC is always larger than those achieved by TDMH and MLNC since PLNC can achieve the individual broadcast channel capacity. We then proposed two opportunistic packet scheduling algorithms for MLNC and PLNC that can stabilize the two-way relaying system for Poisson arrivals. The transmission performance of TDMH, MLNC and PLNC is also investigated in terms of the sum rates and the DMT. The maximum sum rates of the three protocols with or without a traffic pattern constraint were found. We showed that the three transmission protocols, by using the optimally selected relay node, can achieve their corresponding best DMTs achieved by relay collaboration. Also, we clarified that MLNC and PLNC may not always achieve a better DMT than TDMH due to suboptimal time allocation.

\appendices
\section{Proof of Proposition \ref{Pro:DiveMutiTrofTDMH}}\label{App:ProofThmDiveMultiTrofTDMH}
\begin{proof}
Consider the two-way relaying system in Fig.\ref{Fig:TwoWayRelaySys}(a). An outage is experienced when either $W_A$ or $W_B$ cannot be decoded correctly at their destination nodes. Let $\mathcal{E}_A$ ($\mathcal{E}_B$) denote the event that the relay node nodes in $\mathcal{D}_{AB}$ cannot correctly decode $W_A$ ($W_B$) and $\mathcal{E}^{c}_A$ ($\mathcal{E}^c_B$) denote the complement of $\mathcal{E}_A$ $(\mathcal{E}_B)$. According to \eqref{Eqn:DefOutProb2WaySys} and using Boole's inequality, a two-way relaying system has the inequality of outage probability: $\epsilon_{\texttt{TDMH}}\leq \epsilon_{\texttt{TDMH},f}+\epsilon_{\texttt{TDMH},b}$, where $\epsilon_{\texttt{TDMH},f}$ and $\epsilon_{\texttt{TDMH},b}$ are the forward and backward outage probabilities, respectively. Since we know $\epsilon_{\texttt{TDMH},f} = \mathbb{P}\left[\mathcal{E}_{\texttt{TDMH},f}|\mathcal{E}_A\right]\mathbb{P}[\mathcal{E}_A]+
\mathbb{P}\left[\mathcal{E}_{\texttt{TDMH},f}|\mathcal{E}^c_A\right]\mathbb{P}[\mathcal{E}^c_A]$,
$\epsilon_{\texttt{TDMH},f}= \mathbb{P}[\mathcal{E}_A]+\mathbb{P}\left[\frac{\lambda_f}{2}I_{2}<R_{AB}\right]\mathbb{P}[\mathcal{E}^c_A]$
where $\mathbb{P}[\mathcal{E}_A]=\mathbb{P}\left[\frac{\lambda_f}{2}I_{1}<R_{AB}\right]$. \emph{ Note that in the following analysis, we use notation $\gamma\star x$ instead of $\gamma^x$ in order to clearly present the complex expression of exponent $x$}. Let $R_{AB}+R_{BA}=m\log\gamma$ so that $R_{AB}=\frac{m}{1+\mu}\log\gamma$. Therefore,
\begin{eqnarray*}
\epsilon_{\texttt{TDMH},f} &\leq& 2\cdot\mathbb{P}\left[\min\left\{\sum_{D\in\mathcal{D}_{AB}} |h_{DA}|^2,\sum_{D\in\mathcal{D}_{AB}} |h_{DB}|^2\right\} < \gamma\star d_f\right] \overset{(a)}{\dotleq}\gamma\star\left(|\mathcal{D}_{AB}|d_f\right),
\end{eqnarray*}
for large $\gamma$ and $m\in\left(0,\frac{1}{2}(1+\mu)\lambda_f\right)$, where $d_f\defn \left(\frac{2m}{\lambda_f(1+\mu)}-1\right)$ and $(a)$ follows from the notation definition and Lemma \ref{Lem:ProbExpoRVsApprox} in Appendix \ref{App:ResuAnalDiveMultTrof}. Similarly,
$\epsilon_{\texttt{TDMH},b}\dotleq \gamma\star\left(|\mathcal{D}_{AB}|d_b\right),$
for large $\gamma$ and $m\in\left(0,\frac{1}{2}(1+1/\mu)\lambda_b\right)$ where $d_b \defn \left(\frac{2m}{(1+1/\mu)\lambda_b}-1\right)$. Therefore, \eqref{Eqn:CoopDiveMultTrofTDMH} can be obtained for large $\gamma$ and $m\in(0,\frac{1}{2}\min\{(1+\mu)\lambda_f,(1+1/\mu)\lambda_b\})$. Next, consider there is no collaboration in $\mathcal{D}_{AB}$ and an optimal relay is selected to receive and broadcast. In this case \eqref{Eqn:CoopMutuInfoTDMHf} becomes
\begin{eqnarray}
I_{\texttt{TDMH} ,f} = I_{\texttt{TDMH} ,b} = \frac{1}{2}\min\left\{\log\left(1+\gamma|h_{AD_{\texttt{TDMH} }^*}|^2\right),  \log\left[1+\gamma|h_{D_{\texttt{TDMH} }^*B}|^2\right]\right\},\label{Eqn:NonCoopMutuInfoTDMHf}
\end{eqnarray}
where the optimal relay node $D_{\texttt{TDMH} }^*$ is selected according to the following criterion:
\begin{equation*}
    D_{\texttt{TDMH} }^* = \arg\min_{D\in\mathcal{D}_{AB}} (1/C_{DA}+1/C_{DB}) = \arg\max_{D\in\mathcal{D}_{AB}} \frac{|h_{DA}|^2|h_{DB}|^2}{|h_{DA}|^2+|h_{DB}|^2}.
\end{equation*}
That is to choose the relay node with $\max \{\Sigma_{AB}\}$ in $\mathcal{D}_{AB}$. Moreover by substituting \eqref{Eqn:NonCoopMutuInfoTDMHf} into $\epsilon_{\texttt{TDMH},f}$ and $R_{AB}=\frac{m}{1+\mu}\log\gamma$, we can obtain
\begin{eqnarray*}
\epsilon_{\texttt{TDMH},f} \leq 2\cdot\mathbb{P}\left[\frac{|h_{AD_{\texttt{TDMH} }^*}|^2|h_{BD_{\texttt{TDMH} }^*}|^2}{|h_{AD_{\texttt{TDMH} }^*}|^2+|h_{BD^*_{\texttt{TDMH}}}|^2}<\gamma\star d_f\right]\overset{(b)}{\dotleq} \gamma\star\left(|\mathcal{D}_{AB}|d_f\right),
\end{eqnarray*}
where $(b)$ follows from the fact that $D_{\texttt{TDMH}}^*\in\mathcal{D}_{AB}$ is optimal and Lemmas \ref{Lem:ProbExpoRVsApprox} and \ref{Lem:ProbOptiRandVectApprox} in Appendix \ref{App:ResuAnalDiveMultTrof}. Likewise, we can get a similar result for $\epsilon_{\texttt{TDMH},b}$ as shown in above. So \eqref{Eqn:CoopDiveMultTrofTDMH} can be concluded.
\end{proof}

\section{Proof of Proposition \ref{Pro:CoopDiveMutiTrafNC}}\label{App:ProofThmCoopDiveMutiTrafNC}
\begin{proof}
Using the same definitions of $\mathcal{E}_A$ and $\mathcal{E}_B$ in the proof of Proposition \ref{Pro:DiveMutiTrofTDMH}, the outage probability for MLNC and PLNC satisfies the inequality: $\epsilon_{\texttt{NC}} \leq \epsilon_{\texttt{NC},f}+\epsilon_{\texttt{NC},b}$, where ``\texttt{NC}'' means \texttt{MLNC}  or \texttt{PLNC}. Here we only prove the DMT of MLNC since the DMT of PLNC can be proved by following the same steps. According to the proof of Proposition \ref{Pro:DiveMutiTrofTDMH}, it is easy to show the following inequality for $\epsilon_{\texttt{MLNC},f}$:
\begin{eqnarray}
\epsilon_{\texttt{MLNC},f} \leq\mathbb{P}[\mathcal{E}_A]+\mathbb{P}\left[\frac{2}{3}\lambda_f \tilde{I}_{1}<R_{AB}\right]+\mathbb{P}\left[\frac{2}{3}\lambda_f \tilde{I}_{2}<R_{AB}\right],\label{Eqn:ErrProbMLNCf}
\end{eqnarray}
where $\mathbb{P}[\mathcal{E}_A]=\mathbb{P}\left[\frac{2}{3}\lambda_fI_{1}<R_{AB}\right]$. Let $R_{AB}=\frac{m}{1+\mu}\log\gamma$ and consider the first case that every relay node collaborates to receive and then broadcasts at the same time. Considering \eqref{Eqn:NonCoopMutuInfoMLNC} and using Lemma \ref{Lem:ProbExpoRVsApprox}, for large $\gamma$ and $m\in(0,2(1+\mu)\lambda_f/3)$, \eqref{Eqn:ErrProbMLNCf} becomes
\begin{eqnarray*}
\epsilon_{\texttt{MLNC},f}\leq \frac{1}{|\mathcal{D}_{AB}|!}\prod_{D\in \mathcal{D}_{AB}}\sigma_{AD}\left[\gamma\star\left(|\mathcal{D}_{AB}|\tilde{d}_f\right)\right]+(\sigma_{f_1}+\sigma_{f_2})\left(\gamma\star \tilde{d}_f\right) \dotleq \gamma\star \tilde{d}_f,
\end{eqnarray*}
where $\tilde{d}_f\defn \left[\frac{3m}{2(1+\mu)\lambda_f}-1\right]$, $1/\sigma_{f_1}$ and $1/\sigma_{f_2}$ are the variances of $|\sum_{D\in\mathcal{D}_{AB}}h_{AD}|^2$ and $|\sum_{D\in\mathcal{D}_{AB}}h_{BD}|^2$, respectively. Similarly, we can show that $\epsilon_{\texttt{MLNC},b}\dotleq \gamma\star \tilde{d}_b$ where $\tilde{d}_b \defn \left[\frac{3m}{2(1+1/\mu)\lambda_b}-1\right]$. Then DMT in \eqref{Eqn:CoopDiveMultTrofNC01} can be arrived. Now consider MLNC with optimal relay $D_{\texttt{MLNC}}^*$ selected by \eqref{Eqn:OptiBroadcastRelayNC} to broadcast. Then we have the following
\begin{eqnarray*}
\epsilon_{\texttt{MLNC},f} &=& \mathbb{P}\left[|h_{D_{\texttt{MLNC}}^*B}|^2<\gamma\star \tilde{d}_f\right]\overset{(a)}{=} \prod_{D\in\mathcal{D}_{AB}}\mathbb{P}\left[|h_{DB}|^2<\gamma\star \tilde{d}_f\right]\overset{(b)}{\dotleq} \gamma\star\left(|\mathcal{D}_{AB}|\tilde{d}_f\right),
\end{eqnarray*}
where $(a)$ follows from the fact that $D^*_{\texttt{MLNC}}$ is optimal and $\{h_{DB}\}$ are independent, and $(b)$ follows from  Lemma \ref{Lem:ProbExpoRVsApprox} in Appendix \ref{App:ResuAnalDiveMultTrof}. Similarly, we have $\epsilon_{\texttt{MLNC},b}\dotleq \gamma\star\left(|\mathcal{D}_{AB}|\tilde{d}_b\right).$
So \eqref{Eqn:CoopDiveMultTrofNC02} is obtained.

Next, we look at the DMT of MLNC when an optimal relay node $D^*_{\texttt{MLNC}}$ is selected to receive and broadcast. $D^*_{\texttt{MLNC}}$ is determined by \eqref{Eqn:OptiRelayNC} when $\gamma$ is large. Likewise, the first step is to calculate the inequality of the outage probability of the forward transmission with $R_{AB}=\frac{m}{1+\mu}\log\gamma$. According to \eqref{Eqn:OptiRelMutuInfoMLNC} and \eqref{Eqn:OptiRelMutuInfoPLNC}, the inequality of $\epsilon_{\texttt{MLNC},f}$ is obtained as follows
\begin{eqnarray}
&&\mathbb{P}\left[\frac{2}{3}\lambda_f\log\left(1+\gamma|h_{A D^*_{\texttt{MLNC}}}|^2\right)<R_{AB}\right]
\overset{(c)}{\leq}
\prod_{D\in\mathcal{D}_{AB}}\mathbb{P}\left[\frac{|h_{A D}|^2|h_{DB}|^2}{2|h_{DA}|^2+|h_{DB}|^2}<\gamma\star \tilde{d}_f\right],
\end{eqnarray}
where $(c)$ follows that $D^*_{\texttt{MLNC}}$ is optimal and channel gains are independent, and we thus have $\epsilon_{\texttt{MLNC},f}\dotleq \gamma\star\left(|\mathcal{D}_{AB}|\tilde{d}_f\right)$ due to Lemma \ref{Lem:ProbOptiRandVectApprox} in Appendix \ref{App:ResuAnalDiveMultTrof}. Similarly, we also can have $\epsilon_{\texttt{MLNC},b}\dotleq \gamma\star\left(|\mathcal{D}_{AB}|\tilde{d}_b\right)$. Accordingly, \eqref{Eqn:CoopDiveMultTrofNC02} can be concluded. This completes the proof.
\end{proof}

\section{Lemmas for DMT analysis}\label{App:ResuAnalDiveMultTrof}
We need the following definition before proceeding to prove the lemmas in this section.

\emph{\textbf{Definition}}: A function $g(\gamma):\mathbb{R}_{++}\rightarrow\mathbb{R}_{++}$ is said to be exponentially equal to $x$ (\ie $g(\gamma)\,\doteq\, \gamma^x$) if $\lim_{\gamma\rightarrow\infty}\frac{\log g(\gamma)}{\log\gamma}\,=\, x.$
Similar definition can be applied to other signs, such as $\leq$ or $\geq$.
\begin{lemma}\label{Lem:ProbExpoRVsApprox}
Let $\{X_k, k=1,\ldots,K\}$ be $K$ independent exponential random variables with respective parameter $\{\sigma_k, k=1,\ldots,K\}$ and $\theta(\gamma):\mathbb{R}_{++}\rightarrow \mathbb{R}_{++}$. If $\theta(\gamma)\rightarrow 0$ as $\gamma\rightarrow\infty$ and $\theta(\gamma)$ is exponentially equal to $\theta_{\infty}$, then we have the following inequality:
\begin{equation}
\mathbb{P}\left[\sum_{k=1}^{K}X_k < \theta(\gamma)\right] \dotleq \gamma^{K\theta_{\infty}}.
\end{equation}
\end{lemma}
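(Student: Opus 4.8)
The plan is to reduce the sum to its individual summands by a trivial inclusion, obtain a polynomial‑in‑$\theta$ upper bound for the CDF of the sum near the origin, and then let the assumed exponential order of $\theta(\gamma)$ do the rest.

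First I would record the elementary fact that, since each $X_k\ge 0$, we have $X_k\le S$ for $S\defn\sum_{k=1}^K X_k$; hence $\{S<t\}\subseteq\bigcap_{k=1}^K\{X_k<t\}$. Using independence of the $X_k$ together with $1-e^{-u}\le u$ for $u\ge 0$,
\begin{equation*}
\mathbb{P}\!\left[S<t\right]\;\le\;\prod_{k=1}^K\mathbb{P}[X_k<t]\;=\;\prod_{k=1}^K\!\left(1-e^{-\sigma_k t}\right)\;\le\;\left(\prod_{k=1}^K\sigma_k\right)t^K .
\end{equation*}
(If a two‑sided $\doteq$ were wanted one would instead compute the convolution density of $S$, which behaves like $\big(\prod_k\sigma_k\big)t^{K-1}/(K-1)!$ near $0$, yielding $\mathbb{P}[S<t]\sim\big(\prod_k\sigma_k\big)t^K/K!$; but only the upper bound is needed here.)

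Next I would substitute $t=\theta(\gamma)>0$, giving $\mathbb{P}[S<\theta(\gamma)]\le C\,\theta(\gamma)^K$ with $C\defn\prod_k\sigma_k$ a constant independent of $\gamma$. Since the left side is strictly positive, I may take logarithms and divide by $\log\gamma>0$ (for all large $\gamma$):
\begin{equation*}
\frac{\log\mathbb{P}[S<\theta(\gamma)]}{\log\gamma}\;\le\;\frac{\log C}{\log\gamma}\;+\;K\,\frac{\log\theta(\gamma)}{\log\gamma}.
\end{equation*}
Letting $\gamma\to\infty$, the first term vanishes because $C$ is a fixed constant, and the second converges to $K\theta_\infty$ by the hypothesis $\theta(\gamma)\doteq\gamma^{\theta_\infty}$. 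Hence $\limsup_{\gamma\to\infty}\tfrac{\log\mathbb{P}[S<\theta(\gamma)]}{\log\gamma}\le K\theta_\infty$, which is exactly $\mathbb{P}\big[\sum_{k=1}^K X_k<\theta(\gamma)\big]\dotleq\gamma^{K\theta_\infty}$.

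I do not expect a genuine obstacle: once the inclusion $\{S<t\}\subseteq\bigcap_k\{X_k<t\}$ is noticed the argument is short. The only points that want a moment's care are checking that the probability is strictly positive so its logarithm is defined (true since each $X_k$ has support $(0,\infty)$ and $\theta(\gamma)>0$), and observing that the multiplicative constant $C$ — and the factor $1/K!$ if one uses the sharper density estimate — contribute only an additive $O(1)$ to $\log\mathbb{P}$ and therefore disappear after division by $\log\gamma\to\infty$. This same union‑bound reduction is the mechanism used repeatedly in the proofs of Propositions \ref{Pro:DiveMutiTrofTDMH} and \ref{Pro:CoopDiveMutiTrafNC}.
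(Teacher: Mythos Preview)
Your argument is correct and follows essentially the same route as the paper: bound the sum event by an intersection of marginal events, use independence together with $1-e^{-u}\le u$ to get a $\theta(\gamma)^K$ upper bound, and then pass to the exponential order. The only cosmetic difference is that the paper reorders the $X_k$ and uses $\sum_{j\le k}X_j\ge kX_k$ to obtain $X_k<\theta/k$ (picking up an extra $1/K!$ in the constant), whereas you use the simpler inclusion $X_k\le S$; since constants vanish after dividing by $\log\gamma$, both yield the same $\dotleq\gamma^{K\theta_\infty}$.
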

\begin{proof}
Without loss of generality, assume that the random sequence $\{X_k, k=1,\cdots,K\}$ forms an order statistics, \ie $\{X_1\geq X_2\geq\cdots\geq X_K\}$. Thus, the event $\sum_{k=1}^K X_k \leq \theta(\gamma)$ is equivalent to the intersection event of $X_1\leq \theta(\gamma)$, $X_1+X_2\leq\theta(\gamma)$,..., $\sum_{k=1}^K X_k \leq \theta(\gamma)$ because $X_k\geq 0$, for all $k\in[1,\cdots,K]$. That is,
\begin{equation*}
\mathbb{P}\left[\sum_{k=1}^K X_k \leq \theta(\gamma)\right]=\mathbb{P}\left[\bigcap_{k=1}^{K}\left(\sum_{j=1}^{k}X_j\leq\theta(\gamma)\right)\right]\leq
\mathbb{P}\left[\bigcap_{k=1}^{K} \left(k \cdot X_k\leq\theta(\gamma)\right)\right] \overset{(a)}{=} \prod_{i=1}^{K}\mathbb{P}\left[X_k\leq\frac{\theta(\gamma)}{k}\right],
\end{equation*}
where $(a)$ follows from the independence between all random variables. Since all random variables are exponential, then we further have
\begin{equation}
\mathbb{P}\left[X_k\leq\frac{\theta(\gamma)}{k}\right]=1-\exp\left(-\sigma_k\frac{\theta(\gamma)}{k}\right)\overset{(b)}{\leq} \sigma_k\frac{\theta(\gamma)}{k},
\end{equation}
where $(b)$ follows from the fact that $X_k$ is an exponential random variable with parameter $\sigma_k$ and $e^{-y}\geq 1-y,\,\forall y\in\mathbb{R}_+$. By Definition and $\lim_{\gamma\rightarrow\infty}\log\theta(\gamma)/\log\gamma=\theta_{\infty}$, it follows that
$\mathbb{P}\left[\sum_{k=1}^K X_k \leq \theta(\gamma)\right] \leq \prod_{k=1}^{K} \frac{\sigma_k\theta(\gamma)}{k}
=[\theta(\gamma)]^K \frac{1}{K!}\prod_{k=1}^{K}\sigma_k \dotleq \gamma^{K\theta_{\infty}}$.
The proof is complete.
\end{proof}

\begin{lemma}\label{Lem:ProbOptiRandVectApprox}
Let $\mathcal{T}$ be a given countable finite set with cardinality $|\mathcal{T}|$ and $\mathcal{V}$ be a random vector set whose elements are $m$-tuples, independent and nonnegative, \ie $\mathcal{V}\defn\{\mathbf{V}_i, i\in\mathbb{N}_{+}: \mathbf{V}_i\in\mathbb{R}_+^m, \mathbf{V}_i\bot\mathbf{V}_j, i\neq j\}$. Suppose $\forall t\in\mathcal{T}$, $\mathbf{V}_t=(V_{t_1},V_{t_2},\ldots,V_{t_m})^{\top}\in\mathcal{V}$ is an exponential random vector with $m$ independent entries and $\gamma,\theta(\gamma)\in\mathbb{R}_{++}$. $\theta(\gamma)$, $\{\alpha_i(\gamma)\}$ and $\{\beta_i(\gamma)\}$ are exponentially equal to $\theta_{\infty}$, $\{\alpha_{i_{\infty}}\}$ and $\{\beta_{i_{\infty}}\}$, respectively. Let $f(\mathbf{V}_t)$ and $\tilde{f}(V_t)$ be respectively defined as follows:\\
\begin{equation*}
   f(\mathbf{V}_t) \defn \frac{\prod_{i=1}^m V_{t_i} [\sum_{i=1}^m \alpha_i(\gamma)V_{t_i}]}{\prod_{i=1}^m V_{t_i}+\sum_{i=1}^m \beta_i(\gamma)(V_{t_i})^m}\,\,\text{ and }\,\, \tilde{f}(\mathbf{V}_t) \defn \frac{\prod_{i=1}^m V_{t_i}}{\sum_{i=1}^m \beta_i(\gamma)(V_{t_i})^m}.
\end{equation*}
Suppose $t^* \defn \arg\max_{t\in\mathcal{T}} f(\mathbf{V}_t)$ and $\tilde{t}^* \defn \arg\max_{t\in\mathcal{T}} \tilde{f}(\mathbf{V}_t)$. If $\theta(\gamma)\rightarrow 0$ as $\gamma\rightarrow\infty$, then for sufficient large $\gamma$ we have
\begin{eqnarray}
\mathbb{P}\left[f(\mathbf{V}_{t^*})<\theta(\gamma)\right] &\dotleq& \gamma^{|\mathcal{T}|[\theta_{\infty}+m(\beta^+_{\max}-\alpha_{\max})]},\label{Eqn:DotLeqFunVt1}\\
\mathbb{P}\left[\tilde{f}(\mathbf{V}_{\tilde{t}^*})<\theta(\gamma)\right] &\dotleq& \gamma^{|\mathcal{T}|(\theta_{\infty}+m\beta^+_{\max})},\label{Eqn:DotLeqFunVt2}
\end{eqnarray}
where $\alpha_{\max} \defn \max_{i}\{\alpha_{i_{\infty}}\}$ and $\beta^+_{\max} \defn \max_{i}\{\beta_{i_{\infty}},0\}$.
\end{lemma}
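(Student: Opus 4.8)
The plan is to reduce the claim to a one-vector tail estimate and then feed it through Lemma~\ref{Lem:ProbExpoRVsApprox}. Because $t^*=\arg\max_{t\in\mathcal{T}}f(\mathbf{V}_t)$ and $\mathcal{T}$ is finite, the event $\{f(\mathbf{V}_{t^*})<\theta(\gamma)\}$ coincides with $\bigcap_{t\in\mathcal{T}}\{f(\mathbf{V}_t)<\theta(\gamma)\}$, and since the vectors $\{\mathbf{V}_t\}_{t\in\mathcal{T}}\subseteq\mathcal{V}$ are mutually independent this factorises as
\begin{equation*}
\mathbb{P}\left[f(\mathbf{V}_{t^*})<\theta(\gamma)\right]=\prod_{t\in\mathcal{T}}\mathbb{P}\left[f(\mathbf{V}_t)<\theta(\gamma)\right].
\end{equation*}
Hence it suffices to prove the per-vector estimate $\mathbb{P}[f(\mathbf{V}_t)<\theta(\gamma)]\dotleq\gamma^{\theta_\infty+m(\beta^+_{\max}-\alpha_{\max})}$ uniformly in $t$ (the exponent supplied by Lemma~\ref{Lem:ProbExpoRVsApprox} does not depend on the exponential parameters) and then raise it to the power $|\mathcal{T}|$. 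The same reduction applied to $\tilde{f}$ gives \eqref{Eqn:DotLeqFunVt2}; that case is strictly simpler, since the numerator of $\tilde{f}$ is just the product $\prod_i V_{t_i}$ and the $-\alpha_{\max}$ contribution is absent.

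For the per-vector estimate I would replace $f$ by a tractable lower bound. Dividing the numerator and denominator of $f$ by $\sum_j\beta_j(\gamma)(V_{t_j})^m$ gives $f=\tfrac{\tilde{f}}{1+\tilde{f}}\sum_i\alpha_i(\gamma)V_{t_i}$, and since $\tfrac{x}{1+x}\ge\tfrac12\min\{1,x\}$ for $x\ge 0$,
\begin{equation*}
\{f(\mathbf{V}_t)<\theta(\gamma)\}\subseteq\{{\textstyle\sum_i}\alpha_i(\gamma)V_{t_i}<2\theta(\gamma)\}\cup\{\tilde{f}(\mathbf{V}_t){\textstyle\sum_i}\alpha_i(\gamma)V_{t_i}<2\theta(\gamma)\}.
\end{equation*}
Bounding the denominator of $\tilde{f}$ by the AM--GM inequalities $\prod_i V_{t_i}\le(\tfrac1m\sum_i V_{t_i})^m$ and $\sum_j\beta_j(\gamma)(V_{t_j})^m\le(\sum_j\beta_j(\gamma))(\sum_i V_{t_i})^m$, and each numerator sum below by a single largest term, a further union bound confines $\{f(\mathbf{V}_t)<\theta(\gamma)\}$ to a fixed finite union of events, each one asserting that a product, an $m$-th power, or a short sum of independent exponential random variables lies below a power of $\gamma$.

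Each of these events is controlled by Lemma~\ref{Lem:ProbExpoRVsApprox} together with two elementary facts that I would first record: for independent exponentials, $\mathbb{P}[\prod_{i=1}^m V_{t_i}<\gamma^{-a}]\doteq\gamma^{-a}$ (the logarithmic corrections are sub-polynomial and disappear under $\doteq$), whereas $\mathbb{P}[\sum_{i=1}^m V_{t_i}<\gamma^{-a}]\dotleq\gamma^{-ma}$ by Lemma~\ref{Lem:ProbExpoRVsApprox}, and $\mathbb{P}[(V_{t_i})^m<\gamma^{-a}]=\mathbb{P}[V_{t_i}<\gamma^{-a/m}]\doteq\gamma^{-a/m}$. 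Substituting the exponential-equality identities $\theta(\gamma)/\alpha_i(\gamma)\doteq\gamma^{\theta_\infty-\alpha_{i\infty}}$, $1/\beta_j(\gamma)\doteq\gamma^{-\beta_{j\infty}}$ and $1+\sum_j\beta_j(\gamma)\doteq\gamma^{\beta^+_{\max}}$, every term of the union decays at least as fast as $\gamma^{\theta_\infty+m(\beta^+_{\max}-\alpha_{\max})}$; taking the maximum over the finitely many terms and then the $|\mathcal{T}|$-th power yields \eqref{Eqn:DotLeqFunVt1}, while omitting the $\sum_i\alpha_i(\gamma)V_{t_i}$ factor throughout yields \eqref{Eqn:DotLeqFunVt2}.

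I expect the second step to be the main obstacle: the elementary lower bound for $f$ is tight only when the entries $V_{t_i}$ are of comparable order, because the ratio $\prod_i V_{t_i}/(\sum_i V_{t_i})^m$ collapses otherwise, and one must verify that this does not spoil the exponent. The robust remedy is to split also according to the relative orders of the entries, \ie a union bound over a finite discretisation of the exponents $v_i$ defined by $V_{t_i}\doteq\gamma^{-v_i}$, $v_i\ge 0$ (the outage-exponent technique of \cite{LZDNCT03}): on each cell the probability cost is $\doteq\gamma^{-\sum_i v_i}$ by independence of the $m$ marginals, and the exponent of $f$ is an explicit piecewise-linear function $e_f(v)$, so the estimate becomes the linear program $\inf\{\sum_i v_i:\ v\ge 0,\ e_f(v)\le\theta_\infty\}$. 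Optimising this program and matching it against $-\theta_\infty-m(\beta^+_{\max}-\alpha_{\max})$ is the last piece of work; it is most transparent when the $\alpha_i,\beta_j$ are $\gamma$-independent --- which is all that the proof of Proposition~\ref{Pro:CoopDiveMutiTrafNC} actually invokes. A minor bookkeeping point is that when one of the power-of-$\gamma$ thresholds fails to tend to $0$ the associated factor contributes exponent $0$, consistent with $\theta(\gamma)\to 0$ (so $\theta_\infty\le 0$) and $\beta^+_{\max}\ge 0$; once these cases are organised, the remaining computations are routine.
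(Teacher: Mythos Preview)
Your factorisation $\mathbb{P}[f(\mathbf{V}_{t^*})<\theta]=\prod_{t\in\mathcal{T}}\mathbb{P}[f(\mathbf{V}_t)<\theta]$ is exactly how the paper opens, and the reduction of $\tilde f$ to the $f$-case with constant $\alpha_i$ (so $\alpha_{\max}=0$) is also the paper's move. The divergence is in the per-vector bound. You write $f=\tfrac{\tilde f}{1+\tilde f}\sum_i\alpha_iV_{t_i}$, pass via AM--GM to a union of product/sum tail events, and then---because $\prod_i V_{t_i}/(\sum_i V_{t_i})^m$ collapses when the entries are far apart---resort to a Zheng--Tse outage-exponent linear program over $V_{t_i}\doteq\gamma^{-v_i}$. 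The paper instead writes a single algebraic lower bound
\[
f(\mathbf{V}_t)\ \ge\ \frac{\sum_i\alpha_i(\gamma)}{1+\sum_i\beta_i(\gamma)}\cdot V_{t_{\max}}\Bigl(\frac{V_{t_{\min}}}{V_{t_{\max}}}\Bigr)^{m+1}\ =\ \phi_m(\gamma)\,V_{t_{\max}}\,\Psi_{t_m},
\]
obtained from $\prod_iV_{t_i}\ge V_{t_{\min}}^m$ and $\sum_i\alpha_iV_{t_i}\ge(\sum_i\alpha_i)V_{t_{\min}}$ in the numerator and $\prod_iV_{t_i}+\sum_i\beta_iV_{t_i}^m\le(1+\sum_i\beta_i)V_{t_{\max}}^m$ in the denominator. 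The ratio $\Psi_{t_m}=(V_{t_{\min}}/V_{t_{\max}})^{m+1}$ is scale-free, hence a $\gamma$-independent random variable; the paper then conditions on it, applies $1-e^{-x}\le x$ coordinatewise, and absorbs the resulting factor $\mathbb{E}[\Psi_{t_m}^{-1}]$ into the constant hidden by $\dotleq$. In other words, the paper packages all of the ``unequal-entries'' damage you worried about into one $\gamma$-free factor instead of optimising over it, which is shorter than your LP. Your route is more systematic and would also reach the stated exponent once the LP is solved---and it is arguably more robust, since the paper's conditioning step tacitly needs moment control on $\Psi_{t_m}^{-1}$, which is delicate for exponential coordinates---but the paper's isolate-the-scale-invariant-ratio trick is the intended shortcut.
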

\begin{proof}
Since  all random vectors in $\mathcal{V}$ are independent and $t^* = \arg\max_{t\in\mathcal{T}} f(\mathbf{V}_t)$, we have
\begin{equation}\label{Eqn:ProbRandVect01}
    \mathbb{P}\left[f(\mathbf{V}_{t^*})<\theta(\gamma)\right]=\mathbb{P}\left[f(\mathbf{V}_t)<\theta(\gamma), \forall t\in\mathcal{T}\right]=\prod_{t\in\mathcal{T}}\mathbb{P}\left[f(\mathbf{V}_t)<\theta(\gamma)\right].
\end{equation}
In addition, for any $t\in\mathcal{T}$ it is easy to show that
$$f(\mathbf{V}_t) \geq \frac{\sum_{i=1}^m\alpha_i(\gamma)}{1+\sum_{i=1}^m\beta_i(\gamma)}V_{t_{\max}}\left(\frac{V_{t_{\min}}}{V_{t_{\max}}}\right)^{m+1}=\phi_m(\gamma) V_{t_{\max}}\Psi_{t_m} ,$$
where $V_{t_{\min}} \defn \min\{\mathbf{V}_t\}$, $V_{t_{\max}} \defn \max\{\mathbf{V}_t\}$, $\phi_m(\gamma)\defn \frac{\sum_{i=1}^m\alpha_i(\gamma)}{1+\sum_{i=1}^m\beta_i(\gamma)}$ and $\Psi_{t_m} \defn \left(\frac{V_{t_{\min}}}{V_{t_{\max}}}\right)^{m+1}$. Therefore,
$\mathbb{P}\left[f(\mathbf{V}_t)<\theta(\gamma)\right] \leq \mathbb{P}\left[V_{t_{\max}}\Psi_{t_m} < \phi^{-1}_m(\gamma)\theta(\gamma)\right]\leq
\prod_{i=1}^m \mathbb{P}\left[V_{t_i} \Psi_{t_m} < \phi^{-1}_m(\gamma) \theta(\gamma)\right].$
Also,
\begin{eqnarray}\label{Eqn:ProbIneqV}
\mathbb{P}[V_{t_i} \Psi_{t_m} < \phi^{-1}_m(\gamma) \theta(\gamma)]&=&\int_{\mathbb{R}_{++}} \left[1-\exp\left(-\frac{\sigma_{t_i}\theta(\gamma)}{\phi_m(\gamma)\psi_{t_m}}\right)\right] f_{\Psi_{t_m}}(\psi_{t_m})\, d\psi_{t_m}\nonumber \\
&\overset{(c)}{\leq}& \int_{\mathbb{R}_{++}} \frac{\sigma_{t_i}\theta(\gamma)}{\phi_m(\gamma)\psi_{t_m}}f_{\Psi_{t_m}}(\psi_{t_m})\, d\psi_{t_m} = \frac{\sigma_{t_i}\theta(\gamma)}{\phi_m(\gamma)}\mathbb{E}\left[\frac{1}{\Psi_{t_m}}\right],
\end{eqnarray}
where $(c)$ follows from $e^{-x}\geq 1-x, \forall x\in\mathbb{R}_{+}$ and $\sigma_{t_i}$ is the parameter for $V_{t_i}$. For sufficiently large $\gamma$, it follows that
\begin{equation}
\mathbb{P}\left[f(\mathbf{V}_t)<\theta(\gamma)\right]\leq \left(\mathbb{E}\left[\frac{1}{\Psi_{t_m}}\right]\right)^{m}\prod^m_{i=1} \sigma_{t_i}\cdot \left(\phi_m(\gamma)\right)^{-m}\theta(\gamma).
\end{equation}
So for large $\gamma$ \eqref{Eqn:ProbRandVect01} can be rewritten as \eqref{Eqn:DotLeqFunVt1}. For any $t\in\mathcal{T}$, it is easy to show that
$$\tilde{f}(\mathbf{V}_t) \geq \frac{\prod_{i=1}^{m}V_{t_i}}{\prod_{i=1}^{m}V_{t_i}+\sum^m_{i=1}\beta_i(\gamma)(V_{t_i})^m}\defn \hat{f}(\mathbf{V}_t).$$
By considering $f(\mathbf{V}_t)$ with constant $\{\alpha_i(\gamma)\}$ (so $\alpha_{\max}=0$) and the result in \eqref{Eqn:DotLeqFunVt1},
we thus have
\begin{equation}
\mathbb{P}\left[\hat{f}(\mathbf{V}_t)<\theta(\gamma)\right] \dotleq \gamma^{(\theta_{\infty}+m\beta^+_{\max})},\,\,\,\text{and}\,\,\, \mathbb{P}\left[\tilde{f}(\mathbf{V}_{t^*}) < \theta(\gamma)\right] \leq \prod_{t\in\mathcal{T}}
\mathbb{P}\left[\hat{f}(\mathbf{V}_t)<\theta(\gamma)\right].\label{Eqn:IneqDotLeqFunVt2}
\end{equation}
Therefore, \eqref{Eqn:DotLeqFunVt2} can be arrived by \eqref{Eqn:IneqDotLeqFunVt2}. The proof is complete.
\end{proof}

\bibliographystyle{IEEEtran}
\bibliography{IEEEabrv,Ref_NCTWR}
\newpage

\begin{figure}[h]
\centering
  \includegraphics[scale=0.8]{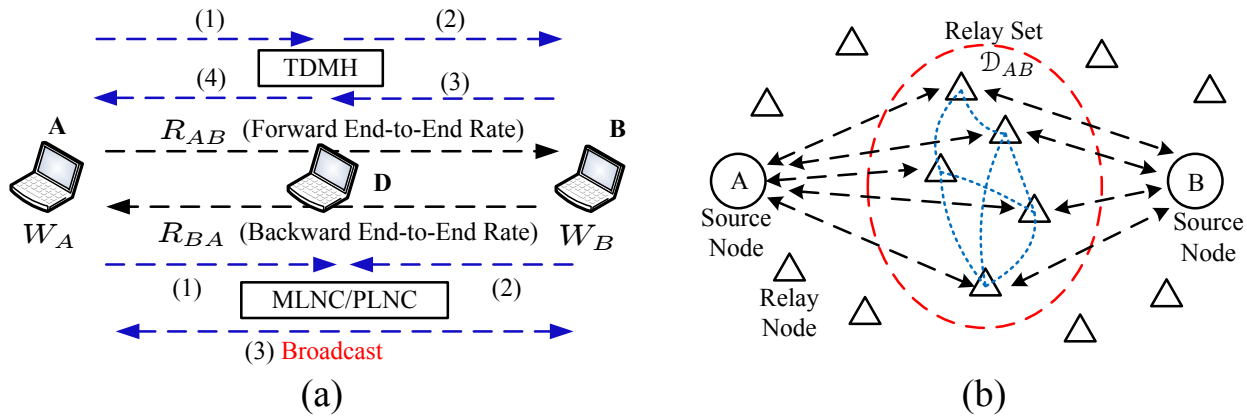}\\
  \caption{(a) Two-way relaying over a single relay: Time-division multihop (TDMH) protocol needs 4 time slots, but wireless network coding only needs 3 time slots. $R_{AB}$ is the end-to-end \emph{forward} rate and $R_{BA}$ is the end-to-end \emph{backward} rate. (b) Two-way relaying over multiple relays: All relay nodes in $\mathcal{D}_{AB}$ are available to collaborate. Thus $\mathcal{D}_{AB}$ becomes a big virtual relay node equipped with $|\mathcal{D}_{AB}|$ antennas, the channels from node A ($\mathcal{D}_{AB}$) to $\mathcal{D}_{AB}$ (A) become a SIMO (MISO) channel.}  \label{Fig:TwoWayRelaySys}
\end{figure}

\begin{figure}[h]
  \centering
  \includegraphics[scale=0.8]{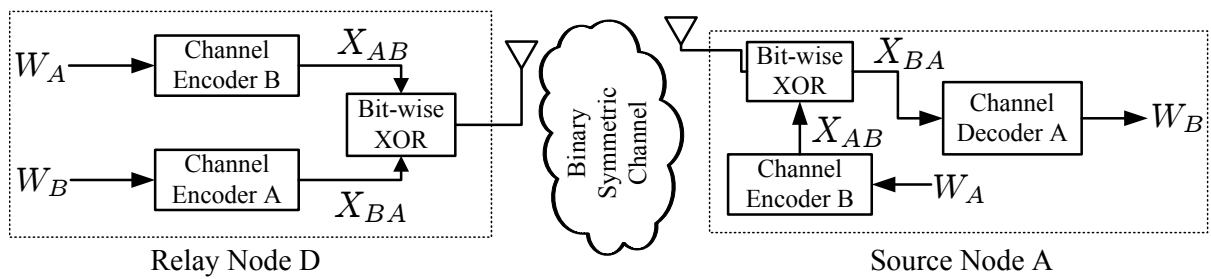}\\
  \caption{PHY-Layer network coding (revised from \cite{FXCHLSS07,FXSS07})}\label{Fig:BinaryPLNC}
\end{figure}

\begin{figure}[h]
  \centering
  \includegraphics[scale=0.8]{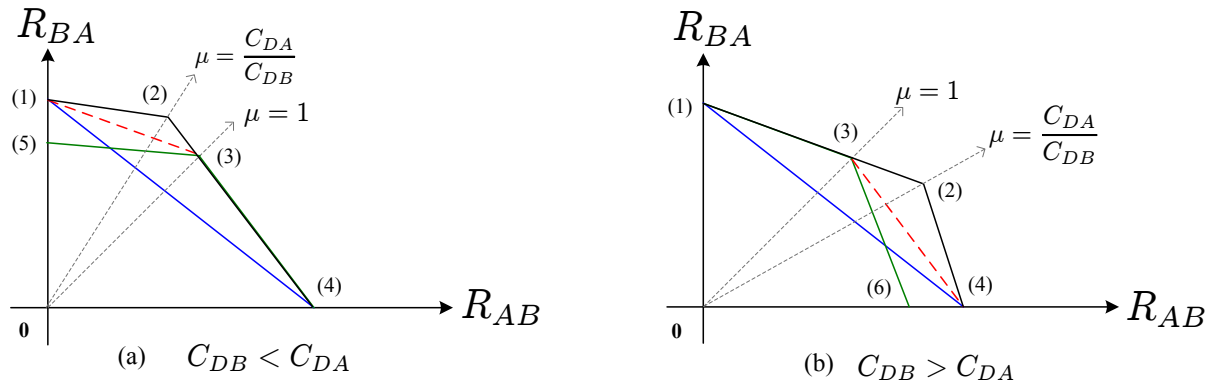}\\
  \caption{Achievable Rate Regions for TDMH, MLNC and PLNC. Verteices (1)-(6) are (0,$\Sigma_{BA}$), ($\Sigma_{ABA}$,$\Sigma_{BAB}$), ($\Sigma_{ABB}$,$\Sigma_{ABB}$), ($\Sigma_{AB}$,0), ($0$,$\Sigma_{BB}$) and ($\Sigma_{AA}$,0), respectively. TDMH is the triangular region with vertices (1), (4), \textbf{0}. MLNC is the quadrilateral region with vertices (1), (3), (4), \textbf{0} in (a) and with vertices (1), (3), (6), \textbf{0} in (b). PLNC is the quadrilateral region with vertices (1), (2), (4), \textbf{0}. Note that MLNC, PLNC and opportunistic MLNC achieve the same rate region with vertices (1), (3), (4), \textbf{0} if $C_{DA}=C_{DB}$.}\label{Fig:AchiRateRegi}
\end{figure}

\begin{figure}[h]
  \centering
  \includegraphics[scale=1]{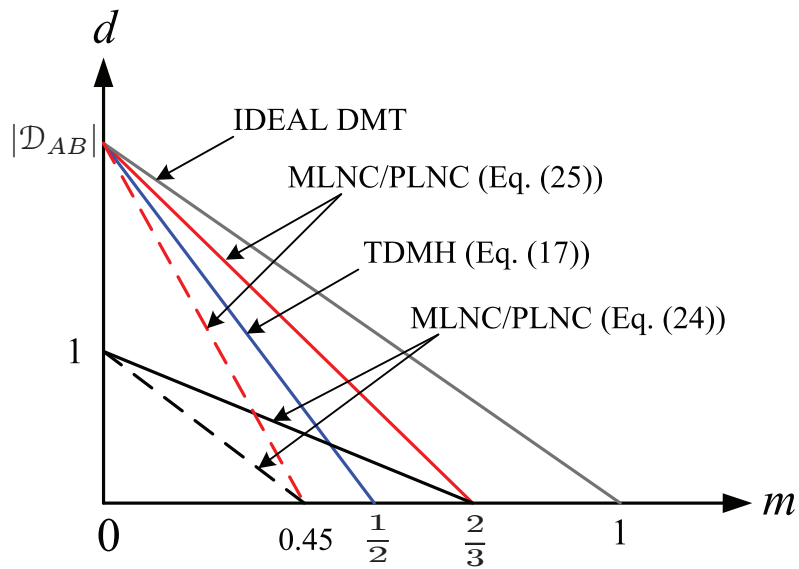}\\
  \caption{Diversity-multiplexing tradeoffs for the three transmission protocols ($|\mathcal{D}_{AB}|>1$ and $\mu=1$). The results of solid lines are the case of optimal time allocation for MLNC and PLNC, \ie $\lambda_f=\lambda_b=0.5$. The results of dashed lines are the case of suboptimal time allocation of MLNC and PLNC, \ie $\lambda_f=0.34$ and $\lambda_b=0.66$.}\label{Fig:DMTradeoff}
\end{figure}

\begin{figure}[h]
  \centering
  \includegraphics[scale=0.6]{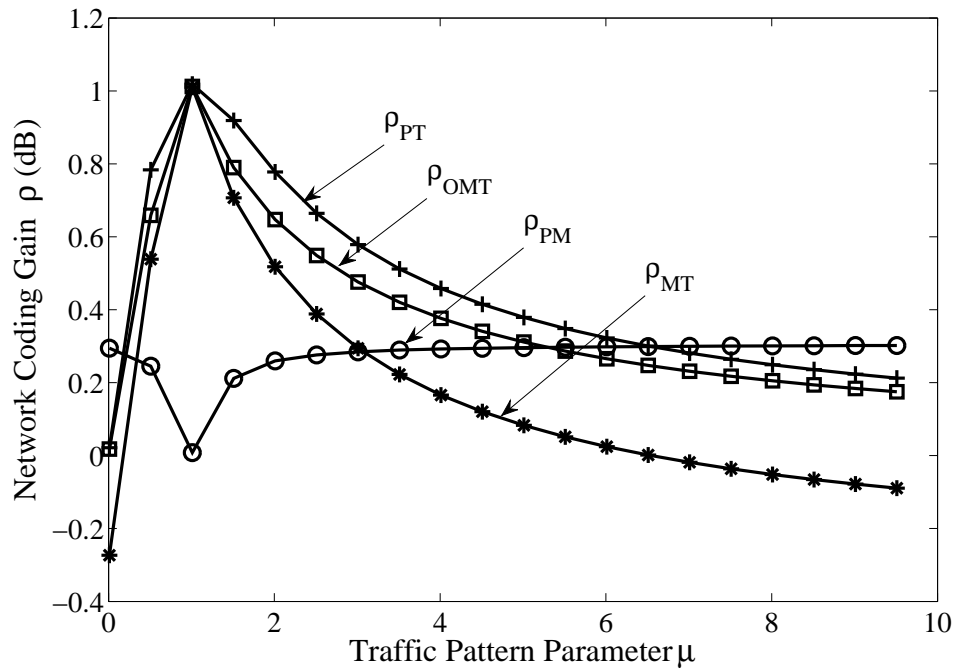}\\
  \caption{Network coding gain $\rho$ vs. traffic pattern parameter $\mu$ for different transmission protocols over a single relay node.}\label{Fig:SimuNCgain}
\end{figure}

\begin{figure}[h]
  \centering
  \includegraphics[scale=0.65]{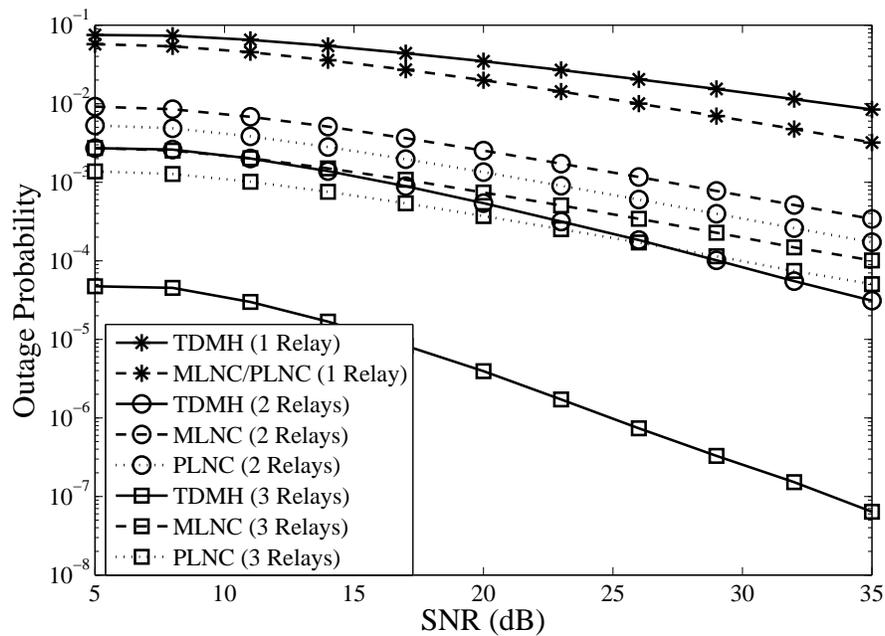}\\
  \caption{Outage probabilities of the TDMH, PLNC and MLNC protocols over multiple relay nodes with collaboration. All relay nodes collaborate to reach receive and transmit MRC for TDMH. For MLNC and PLNC, all relay nodes collaborate to perform receive MRC and then they all broadcast at the same time.}
  \label{Fig:CoopOutageProb01}
\end{figure}

\begin{figure}[h]
  \centering
  \includegraphics[scale=0.65]{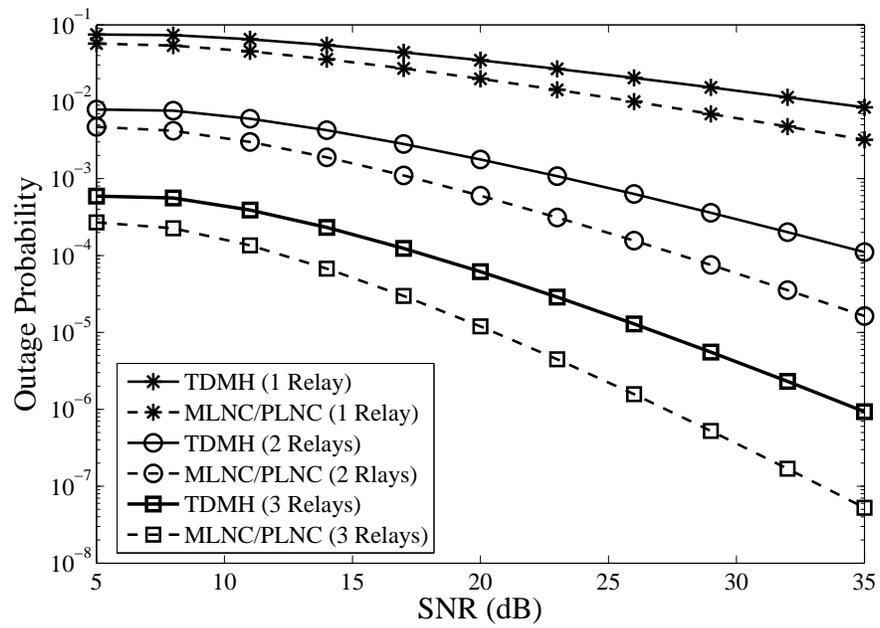}\\
  \caption{Outage probabilities of the TDMH, PLNC and MLNC protocols without relay collaboration. An optimal relay node is selected to receive and transmit/broadcast for all protocols.}
  \label{Fig:NonCoopOutageProb}
\end{figure}

\end{document}